\newtheorem{remark}{Remark}[section]
\newtheorem{claim}{Claim}[section]
\newcommand{\ignore}[1]{}
\newcommand{\C}{\mathcal{C}}
\newcommand{\Rr}{\mathcal{R}}
\newcommand{\A}{\mathcal{A}}
\renewcommand{\S}{\mathcal{S}}
\newcommand{\T}{\mathcal{T}}
\newcommand{\U}{\mathcal{U}}
\newcommand{\Pp}{\mathcal{P}}
\newcommand{\set}[1]{\{#1\}}
\newcommand{\eset}[1]{\{\,#1\,\}}
\newcommand{\seq}[1]{\langle #1 \rangle}
\newcommand{\tree}{\textup{tree}}
\newcommand{\evenloops}{\textup{EvenLoops}\xspace}
\newcommand{\oddloops}{\textup{OddLoops}\xspace}
\newcommand{\limsupeven}{\textup{LimsupEven}\xspace}
\newcommand{\limsupodd}{\textup{LimsupOdd}\xspace}
\newcommand{\evencycles}{\textup{EvenCycles}\xspace}
\newcommand{\oddcycles}{\textup{OddCycles}\xspace}
\newcommand{\eps}{\varepsilon}
\newcommand{\floor}[1]{\lfloor #1 \rfloor}
\newcommand{\ceil}[1]{\lceil #1 \rceil}
\begin{document}

\title{Universal trees grow inside separating automata: \\
  Quasi-polynomial lower bounds for parity games}

\author[1]{Wojciech Czerwi\'nski}
\author[2]{Laure Daviaud}
\author[3,4,2]{Nathana\"el Fijalkow}
\author[2]{Marcin~Jurdzi\'nski}
\author[2]{Ranko~Lazi\'c}
\author[1]{Pawe{\l} Parys}
\affil[1]{MIMUW, University of Warsaw, Poland}
\affil[2]{DIMAP, Department of Computer Science, University of Warwick, UK}
\affil[3]{CNRS, LaBRI, France}
\affil[4]{The Alan Turing Institute, UK}
\date{}                     

\maketitle

\begin{abstract}
  Several distinct techniques have been proposed to design
  quasi-polynomial algorithms for solving parity games since the
  breakthrough result of Calude, Jain, Khoussainov, Li, and
  Stephan~(2017):    
  \emph{play summaries}, \emph{progress measures}
  and \emph{register games}. 
  We argue that all those techniques can be viewed as instances of the
  \emph{separation approach} to solving parity games, a key technical
  component of which is constructing (explicitly or implicitly) an
  automaton that \emph{separates} languages of words encoding plays
  that are (decisively) won by either of the two players.    
  Our main technical result is a quasi-polynomial lower bound on the
  size of such separating automata that nearly matches the current
  best upper bounds.
  This forms a \emph{barrier} that all existing approaches must
  overcome in the ongoing quest for a polynomial-time algorithm for
  solving parity games. 
  The key and fundamental concept that we introduce and study is a
  \emph{universal ordered tree}.
  The technical highlights are a quasi-polynomial lower bound on the
  size of universal ordered trees and a proof that every 
  \emph{separating safety automaton} has a universal tree
  hidden in its state space.
\end{abstract}

\section{Introduction}
\label{sec:intro}

\subsection{Parity games.}
A \emph{parity game} is played on a directed graph by two players who
are called Even and Odd.   
A play starts at a designated vertex and then the players move by
following outgoing edges forever, thus forming an infinite path. 
Every vertex of the graph is owned by one of the two players and it is
always the owner of the vertex who moves by following an outgoing
edge from the current vertex to the next one.  

This completes the description of the dynamics of a play, but how do
we declare the winner of an infinite path formed in this way?
For this, we need to inspect positive integers that label all edges in
the graph, which we refer to as \emph{edge priorities}, or simply
priorities. 
Player Even is declared the winner of a play if the highest priority
that occurs infinitely many times is even, and otherwise player Odd
wins; 
equivalently, the winner is the parity of the limsup (limes superior) 
of the priorities that occur in the play. 

The principal algorithmic problem studied in the context of parity
games is \emph{deciding the winner}:
given a game graph as described above and a starting vertex, does
player Even have a winning strategy---a recipe for winning every play
starting from the designated vertex, no matter what edges her opponent 
Odd follows whenever it is his turn to move.

\subsubsection*{Determinacy and complexity.} 
A \emph{positional strategy} for Even is a set of edges that go out of 
vertices she owns---exactly one such edge for each of her vertices; 
Even uses such a strategy by always---if the current vertex is owned
by her---following the unique outgoing edge that is in the strategy.  
Note that when Even uses a positional strategy, her moves depend only
on the current vertex---they are oblivious to what choices were made
by the players so far. 
A basic result for parity games that has notable implications is their  
\emph{positional determinacy}~\cite{EJ91,Mos91}:
for every starting vertex, exactly one of the players has a winning
strategy and hence the set of vertices is partitioned into the
\emph{winning set} for Even and the winning set for Odd;
moreover, each player has a \emph{positional strategy} that
is winning for her from all starting vertices in her winning set. 

An important corollary of positional determinacy is that deciding the
winner in parity games is \emph{well characterized}, i.e., it is both
in NP and in co-NP~\cite{EJS93}. 
Several further complexity results suggest that it may be difficult to
provide compelling evidence for hardness of solving parity games: 
deciding the winner is known to be also in UP and in
co-UP~\cite{Jur98}, and computing winning strategies is in PLS, PPAD,
and even in their subclass CLS~\cite{DP11,DTZ18}. 
Parity games share this intriguing complexity-theoretic status with
several other related problems, such as mean-payoff games~\cite{ZP96},
discounted games, and simple stochastic games~\cite{Con92}, but they
are no harder than them since there are polynomial reductions from
parity games to mean-payoff games, to discounted games, and to simple
stochastic games~\cite{Jur98,ZP96}. 

\subsubsection*{Significance and impact.}
Parity games play a fundamental role in automata theory, logic, and
their applications to verification and synthesis.
Specifically, the algorithmic problem of deciding the winner in parity
games is polynomial-time equivalent to the model checking in the modal 
$\mu$-calculus and to checking emptiness of automata on infinite trees
with parity acceptance conditions~\cite{EJS93}, and it is at the heart
of algorithmic solutions to the Church's synthesis
problem~\cite{Rab72}. 

The impact of parity games goes well beyond their place of origin in
automata theory and logic. 
We illustrate it by the resolutions of two long-standing open problems 
in \emph{stochastic planning} and in \emph{linear programming},
respectively, that were directly enabled by the ingenious examples of 
parity games given by Friedmann~\cite{Fri09}, on which the 
\emph{strategy improvement algorithm}~\cite{VJ00} requires
exponentially many iterations. 
Firstly, Fearnley~\cite{Fea10} has shown that Friedmann's examples can
be adapted to prove that \emph{Howard's policy iteration} algorithm
for Markov decision processes (MDPs) requires exponentially many
iterations.   
Policy iteration has been well-known and widely used in stochastic
planning and AI since 1960's, and it has been celebrated for its 
fast termination: until Fearnley's surprise result, no examples were
known for which a superlinear number of iterations was necessary.
Secondly, Friedmann, Hansen, and Zwick~\cite{FHZ11} have adapted the
insights from the lower bounds for parity games and MDPs to prove that
natural \emph{randomized pivoting rules} in the 
\emph{simplex algorithm} for linear programming may require
subexponentially many iterations.   
The following quote from the full version of Friedmann et
al.~\cite{FHZ11} 
highlights the role that parity games (PGs) played in their
breakthrough: 
\begin{quote}
  ``our construction can be described and understood without knowing
  about PGs.  
  We would like to stress, however, that most of our intuition about
  the problem was obtained by thinking in terms of PGs.  
  Thinking in terms of MDPs seems harder, and we doubt whether we
  could have obtained our results by thinking directly in terms of
  linear programs.''
\end{quote}
In both cases, Friedmann's examples of parity games and their analysis
have been pivotal in resolving the theoretical worst-case complexity
of influential algorithms that for many decades resisted rigorous 
analysis while performing outstandingly well in practice. 

\subsubsection*{Current state-of-the-art.}
It is a long-standing open question whether there is a polynomial-time
algorithm for solving parity games~\cite{EJS93}.
The study of algorithms for solving parity games has been dominated
for over two decades by algorithms whose run-time was exponential in
the number of distinct
priorities~\cite{EL86,BCJLM97,Sei96,Jur00,VJ00,Sch17}, or mildly 
subexponential for large number of priorities~\cite{BV07,JPZ08}.
The breakthrough came in 2017 from Calude et al.~\cite{CJKLS17} 
who gave the first quasi-polynomial-time algorithm using the novel
idea of \emph{play summaries}.
Several other quasi-polynomial-time algorithms were developed soon
after, including space-efficient \emph{progress-measure} based
algorithms of Jurdzi\'nski and Lazi\'c~\cite{JL17} and of Fearnley,
Jain, Schewe, Stephan, and Wojtczak~\cite{FJSSW17}, and the algorithm
of Lehtinen~\cite{Leh18}, based on her concept of 
\emph{register games}.

\subsection{The separation approach.}

Boja\'nczyk and Czerwi\'nski~\cite[Section 3]{BC18} have observed that
the main technical contribution of Calude et al.~\cite{CJKLS17} can be
elegantly phrased using concepts from automata theory. 
They have pointed out that in order to reduce solving a parity game of
size at most~$n$ to solving a conceptually and algorithmically much
simpler \emph{safety game}, it suffices to provide a finite
\emph{safety automaton} that achieves the task of \emph{separating}
two sets $\evenloops_n$ and $\oddloops_n$ of infinite words that
describe plays on graphs of size at most~$n$ that are
\emph{decisively won} by the respective two players. 
For encoding plays in parity games, they use words in which every
letter is a pair that consists of a vertex and a priority. 
The definition of such a word being decisively won by a player 
that was proposed by Boja\'nczyk and Czerwi\'nski is that the biggest
priority that occurs on every cycle---an infix in which the first
vertex and the vertex immediately following the infix coincide---is of
her parity. 
Concerning separation, for two disjoint languages~$J$ and~$K$, we say
that a language~$S$ \emph{separates} $J$ from~$K$ if $J \subseteq S$ 
and $S \cap K = \emptyset$, and we say that an automaton~$\A$ is a 
\emph{separator} of two languages if the language $L(\A)$ of words
recognized by~$\A$ separates them. 
The main technical contribution of Calude et al.~\cite{CJKLS17} can
then be stated as constructing separators---of quasi-polynomial
size---of the languages $\evenloops_n$ and $\oddloops_n$.

Note that a separator of $\evenloops_n$ and $\oddloops_n$ has a
significantly easier task than a \emph{recognizer} of exactly the set
$\limsupeven$ of words that are won by Even---that is required to
accept all words in $\limsupeven$, and to reject all words in
$\limsupodd$, the set of all words that are won by Odd.   
Instead, a separator may reject some words won by Even and accept some
words won by Odd, as long as it accepts all words that are decisively
won by Even, and it rejects all words that are decisively won by Odd.   

What Calude et al.~\cite{CJKLS17} exploit is that if one of the
players uses a \emph{positional} winning strategy then all plays are
indeed encoded by words that are won decisively by her, no matter how
the opponent responds.  
The formalization of Boja\'nczyk and Czerwi\'nski~\cite{BC18} is
that---using positional determinacy of parity
games~\cite{EJ91,Mos91}---in order to solve a parity game of size at
most~$n$, it suffices to solve a strategically and algorithmically
much simpler \emph{safety game} that is obtained as a simple
chained product of the parity game and a safety automaton that is 
a separator of $\evenloops_n$ and~$\oddloops_n$.  

\subsection{Our contribution.}
Our main conceptual contributions include making explicit the notion
of a \emph{universal ordered tree} and unifying all the existing 
quasi-polynomial algorithms for parity
games~\cite{CJKLS17,JL17,GIJ17,FJSSW17,Leh18} as instances of the 
\emph{separation approach} proposed by Boja\'nczyk and
Czerwi\'nski~\cite{BC18}.

We point out that it is exactly the universality property of an
ordered tree that makes it suitable for serving as a witness search
space in \emph{progress measure lifting}
algorithms~\cite{Jur00,BJW02,Sch17,JL17,DJL18}, and that the running
time of such algorithms is dictated by the size of 
the universal tree used.
In particular, by proving a quasi-polynomial lower bound on the size
of universal trees in Section~\ref{sec:lower-universal}, we rule out
the hope for improving progress measure lifting algorithms to work in
sub-quasi-polynomial time by finding smaller universal trees. 
As our other main technical results in Section~\ref{sec:proof} show,
however, universal trees are fundamental not only for progress measure
lifting algorithms, but for all algorithms that follow the separation
approach.   

We argue that in the separation approach, it is appropriate to
slightly adjust the choice of languages to be separated, from
$\evenloops_n$ and $\oddloops_n$ proposed by Boja\'nczyk and
Czerwi\'nski~\cite{BC18} to the more suitable $\evencycles_n$ and
$\oddcycles_n$ 
(see Section~\ref{sec:langs} for the definitions).   
We also verify, in Section~\ref{sec:seps-everywhere}, that all the
three distinct techniques of solving parity games in quasi-polynomial
time considered in the recent literature 
(\emph{play summaries}~\cite{CJKLS17,GIJ17,FJSSW17},
\emph{progress measures}~\cite{JL17}, and
\emph{register games}~\cite{Leh18}) 
yield separators for languages $\evencycles_n$ and $\limsupodd$, which  
(as we argue in Section~\ref{sec:sep-approach}) 
makes them suitable for the separation approach.

The main technical contribution of the paper, described in
Sections~\ref{sec:lower-universal} and~\ref{sec:proof} is a proof that
every (non-deterministic) safety automaton that separates
$\evencycles_n$ from $\limsupodd$ has a number of states that is at
least quasi-polynomial. 
Recall that in Section~\ref{sec:lower-universal} we establish a
quasi-polynomial lower bound on the size of universal trees. 
Then, in Section~\ref{sec:proof}, our argument is based on proving
that in every separating automaton as above, one can define a sequence
of \emph{linear quasi-orders} on the set of states, in which each 
quasi-order is a refinement of the quasi-order that follows it in the
sequence.  
Such a sequence of linear quasi-orders can be naturally interpreted
as an \emph{ordered tree} in which every leaf is populated by at least 
one state of the automaton.   
We then also prove that the ordered tree must contain a  
\emph{universal ordered tree}, and the main result follows
from the earlier quasi-polynomial lower bound for universal trees.

Another technical highlight, presented in
Section~\ref{section:separator-from-universal-trees}, is a
construction of a separator from an arbitrary universal tree, which
together with the lower bound in Section~\ref{sec:proof} implies that
the sizes of smallest universal trees and of smallest separators
coincide.   
The correctness of the construction relies on existence of 
\emph{progress measures} that map from vertices of game graphs into
leaves of universal trees, and that witness winning strategies.  

The significance of our main technical results is that they provide
evidence against the hope that any of the existing technical
approaches to developing quasi-polynomial algorithms for solving 
parity games~\cite{CJKLS17,JL17,FJSSW17,Leh18} may lead to further
improvements to sub-quasi-polynomial algorithms. 
In other words, our quasi-polynomial lower bounds for universal trees
and separators form a \emph{barrier} that all existing approaches must
overcome in the ongoing quest for a polynomial-time algorithm for
solving parity games.

\section{Progress measures and universal trees}

Progress measures are witnesses of winning strategies 
that underpin the design of the
\emph{small progress measure lifting algorithm}~\cite{Jur00}. 
For nearly two decades, this algorithm and its
variants~\cite{BJW02,Sch17,CHL17,JL17,DJL18} have been
consistently matching or beating the worst-case performance guarantees
of the state-of-the-art algorithms for solving parity games. 

In this section we introduce the notion of
\emph{universal ordered trees} and we point out how the universality 
property uniformly justifies the correctness of progress measure
lifting algorithms, and that the size of universal trees used in such
algorithms drives their worst-case run-time complexity. 
Those observations motivate the key technical question that we tackle
in this section, namely whether the recent construction of
quasi-polynomial universal trees by Jurdzi\'nski and
Lazi\'c~\cite{JL17} can be significantly improved to yield a 
sub-quasi-polynomial algorithms for solving parity games.

The main original technical contribution of this section is a negative
answer to this question: 
we establish a quasi-polynomial lower bound on the size of universal
trees that nearly matches (up to a small polynomial factor) the upper
bound of Jurdzi\'nski and Lazi\'c.
This dashes the hope of obtaining a significantly faster algorithm for
solving parity games by constructing smaller universal trees and
running the progress measure algorithm on them.

\subsection{Game graphs and strategy subgraphs.}
Throughout the paper, we write~$V$ for the set of vertices and $E$ for
the set of edges in a parity game graph, and we use $n$ 
to denote the numbers of vertices. 
For every edge $e \in E$, its \emph{priority} $\pi(e)$ is a positive
integer, and we use~$d$ to denote the smallest even number that
priority of no edge exceeds. 
Without loss of generality, we assume that every vertex has at least
one outgoing edge. 
We say that a cycle in a game graph is \emph{even} if the largest
edge priority that occurs on it is even;
otherwise it is \emph{odd}. 

Recall that a positional strategy for Even is a set of edges that go
out of vertices she owns---exactly one such edge for each of her
vertices.    
The \emph{strategy subgraph} of a positional strategy for Even is the
subgraph of the game graph that includes all outgoing edges from
vertices owned by Odd and exactly those outgoing edges from vertices
owned by Even that are in the positional strategy.
Observe that the set of plays that arise from Even playing her
positional strategy is exactly the set of all plays in the strategy
subgraph. 
Moreover, note that every cycle in the strategy subgraph of a
positional strategy for Even that is winning for her is even: 
otherwise, by repeating an odd cycle indefinitely, we would get a play
that is winning for Odd.
Further overloading terminology, we say that a (parity game) graph is 
\emph{even} if all cycles in it are even.

\subsection{Ordered trees and progress measures.}
\label{sec:pm-and-ut}

An \emph{ordered tree} is a prefix-closed set of sequences of a
linearly ordered set. 
We refer to such sequences as tree \emph{nodes}, we call the elements
of such sequences \emph{branching directions}, and we use the standard
ancestor-descendent terminology for nodes.
For example: node $\seq{}$ is the \emph{root} of a tree; node
$\seq{x}$ is the \emph{child} of the root that is reached from the
root via the branching direction~$x$; node $\seq{x, y}$ is the
\emph{parent} of node $\seq{x, y, z}$; nodes $\seq{}$, $\seq{x}$, and
$\seq{x, y}$ are \emph{ancestors} of node $\seq{x, y, z}$; and nodes
$\seq{x, y}$ and $\seq{x, y, z}$ are \emph{descendants} of nodes
$\seq{}$ and $\seq{x}$. 
Moreover, a node is a \emph{leaf} if it does not have any
descendants.
All nodes in an ordered tree are linearly ordered by the
\emph{lexicographic order} on sequences that is induced by the assumed
linear order on the set of branching directions. 
For example, we have $\seq{x} < \seq{x, y}$, and 
$\seq{x, y, w} < \seq{x, z}$ if $y < z$. 
The \emph{depth} of a node is the number of elements in the eponymous
sequence, the \emph{height} of a tree is the maximum depth of a
node in it, and the \emph{size} of a tree is the number of its
leaves.

A \emph{tree labelling} of a parity game is a mapping $\mu$ from
the vertices in the game graph to leaves in an ordered tree of
height~$d/2$;
for convenience and without loss of generality we assume that every
leaf has depth~$d/2$.   
We write $\seq{m_{d-1}, m_{d-3}, \dots, m_1}$ to denote such a leaf,
and for every priority~$p$, $1 \leq p \leq d$, we define its
$p$-truncation $\seq{m_{d-1}, m_{d-3}, \dots, m_1}|_p$ to be the
sequence $\seq{m_{d-1}, m_{d-3}, \dots, m_p}$ if $p$ is odd, and 
$\seq{m_{d-1}, m_{d-3}, \dots, m_{p+1}}$ if $p$ is even. 
We say that a tree labelling~$\mu$ of the game is a 
\emph{progress measure} if the following \emph{progress condition}
holds for every edge $(v, u)$ in the strategy subgraph of some
positional strategy for Even: 
\begin{compactitem}
\item
  if $p$ is even then $\mu(v)|_{\pi(v, u)} \geq \mu(u)|_{\pi(v, u)}$;
\item
  if $p$ is odd then $\mu(v)|_{\pi(v, u)} > \mu(u)|_{\pi(v, u)}$. 
\end{compactitem}
We recommend inspecting the (brief and elementary) proof
of~\cite[Lemma~2]{JL17}, which establishes that every cycle in the 
strategy subgraph whose all edges satisfy the progress condition is
even. 
It gives a quick insight into the fundamental properties of progress
measures and it shows the easy implication in the following theorem
that establishes progress measures as witnesses of winning strategies
in parity games.  

\begin{theorem}[\cite{EJ91,Jur00}] 
\label{thm:pm-witness-ws}
  Even has a winning strategy from every vertex in a parity game if
  and only if there is a progress measure on the game graph. 
\end{theorem}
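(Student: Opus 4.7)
The plan is to prove the two implications separately, using Lemma~2 of~\cite{JL17} for the easy direction and positional determinacy~\cite{EJ91,Mos91} for the harder one.

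For the ``if'' direction, suppose $\mu$ is a progress measure, witnessed by a positional strategy $\sigma$ for Even on whose strategy subgraph the progress condition holds. I let Even play $\sigma$, so every resulting play stays in the strategy subgraph, in which by the cited Lemma~2 of \cite{JL17} every cycle is even. If $p$ is the largest priority occurring infinitely often on such a play, then $p$ appears on some cycle in the play's tail; that cycle is even, forcing $p$ to be even, and hence Even wins.

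For the ``only if'' direction, I first invoke positional determinacy to obtain a positional strategy $\sigma$ for Even that is winning from every vertex, and observe that every cycle in its strategy subgraph $G_\sigma$ is even (otherwise Odd would win by looping an odd cycle indefinitely). For each vertex $v$ and each odd priority $p$, I define $m_p(v)$ to be the maximum, over paths in $G_\sigma$ starting at $v$ and using no edge of priority strictly greater than $p$, of the number of edges of priority exactly $p$ on the path. Because $G_\sigma$ contains no odd cycle, no priority-$p$ edge may be visited twice on such a path, so $m_p(v) \le n$. I then set $\mu(v) = \seq{m_{d-1}(v), m_{d-3}(v), \dots, m_1(v)}$, viewed as a leaf of the complete ordered tree $\{0,1,\dots,n\}^{d/2}$ of height $d/2$.

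To verify the progress condition, fix an edge $(v,u)$ of $G_\sigma$ with priority $q$. For every odd $p > q$, prepending $(v,u)$ to a witness path for $m_p(u)$ yields a path from $v$ of the same character with the same count of priority-$p$ edges, so $m_p(v) \ge m_p(u)$; when $q$ is even these are exactly the coordinates of $\mu(v)|_q$ and $\mu(u)|_q$, so componentwise domination yields $\mu(v)|_q \ge \mu(u)|_q$. When $q$ is odd, the same prepending applied to a witness path for $m_q(u)$ produces a path from $v$ with $m_q(u)+1$ edges of priority $q$ and none of priority $>q$, so $m_q(v) \ge m_q(u)+1$; combined with the inequalities for odd $p > q$, this gives the strict lexicographic increase $\mu(v)|_q > \mu(u)|_q$. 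The main obstacle is this last verification: one must track carefully how prepending a single edge interacts with the ``no priority above the threshold'' constraint at each truncation level, which works out precisely because the new edge's priority $q$ sits at exactly the coordinate where either equality (if $q$ is even) or strict increase (if $q$ is odd) is required.
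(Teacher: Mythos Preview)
The paper does not actually prove this theorem: it is quoted from~\cite{EJ91,Jur00}, and the surrounding text only points the reader to~\cite[Lemma~2]{JL17} for the easy (``if'') implication.  So there is no paper proof to compare against; your argument is a self-contained reconstruction of the classical one, and it is essentially correct.

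Your ``if'' direction matches the paper's hint exactly.  For the ``only if'' direction, your construction of $m_p(v)$ as the maximal number of priority-$p$ edges on a $(\leq p)$-path from~$v$ in the even strategy subgraph, followed by the edge-by-edge verification of the progress condition via prepending, is precisely the standard argument behind~\cite{Jur00}.  One small slip: from ``no priority-$p$ edge may be visited twice'' you conclude $m_p(v) \le n$, but that reasoning only gives a bound by the number of priority-$p$ edges in~$G_\sigma$, which may exceed~$n$.  The bound $m_p(v) \le n$ does hold, but the clean argument is pigeonhole on the \emph{source vertices} of the priority-$p$ edges along the path: if there were $n+1$ such edges, two would share a source, and the closed walk between them would contain a priority-$p$ edge and hence yield an odd simple cycle.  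In any case, all the theorem needs is that each $m_p(v)$ is finite so that $\mu$ lands in some ordered tree of height~$d/2$, and that much is already secured by your observation.
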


\subsection{Finding tree witnesses on universal trees.}
\label{sec:universal-trees}

It is a straightforward but fruitful observation of Jurdzi\'nski and
Lazi\'c~\cite{JL17} that a progress measure on a game graph with~$n$
vertices and at most~$d$ distinct edge priorities is a mapping from
the vertices in the game graph to nodes in an ordered tree  
of height at most~$d/2$ and \emph{with at most~$n$ leaves}
(all subtrees that no vertex is mapped to can be pruned). 
It motivates the fundamental concept that we introduce in this 
section---\emph{universal trees}.

An \emph{$(\ell, h)$-universal (ordered) tree} is an ordered tree, 
such that every ordered tree of height at most~$h$ and
with at most~$\ell$ leaves can be isomorphically embedded into it; 
in such an embedding, the root of the tree must be mapped onto the
root of the universal tree, and the children of every node must be 
mapped---injectively and in an order-preserving way---onto the
children of its image.  

The following proposition follows directly from the above
``straightforward but fruitful'' observation and the definition of a
universal tree.

\begin{proposition}[\cite{JL17}]
\label{prop:pm-into-ut}
  Every progress measure on a graph with~$n$ vertices and with at
  most~$d$ priorities can be seen as a map into an 
  $(n, d/2)$-universal tree. 
\end{proposition}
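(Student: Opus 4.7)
The plan is to exploit the parenthetical observation just made in the text (``all subtrees that no vertex is mapped to can be pruned'') and then invoke the defining property of an $(n, d/2)$-universal tree. Concretely, given a progress measure $\mu$, I would first produce a small ordered tree that exactly supports $\mu$, and then embed that small tree into a universal one.

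First, let $\mu \colon V \to \mathrm{leaves}(T)$ be a progress measure, where $T$ is an ordered tree of height $d/2$ (which may a priori be very large). I would define $T'$ to consist of all nodes of $T$ that have some descendant in the image $\mu(V)$, together with the ancestors of such nodes; equivalently, $T'$ is obtained from $T$ by iteratively deleting any subtree whose leaves are disjoint from $\mu(V)$. The key points to record are: (i) $T'$ is still an ordered tree in a natural way, inheriting both the parent-child relation and the left-to-right order from $T$; (ii) $\mu$ factors through the inclusion $T' \hookrightarrow T$, so we may regard $\mu$ as a map $V \to \mathrm{leaves}(T')$; (iii) the height of $T'$ is at most $d/2$; and (iv) since every leaf of $T'$ lies in $\mu(V)$, $T'$ has at most $|V| = n$ leaves.

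Now let $U$ be any $(n, d/2)$-universal tree. By the definition recalled just above the proposition, there exists an isomorphic embedding $\varphi \colon T' \hookrightarrow U$ that maps the root to the root and, at every node, sends children injectively and in an order-preserving way into the children of the image. Such an embedding in particular preserves the ancestor-descendant relation and the lexicographic order on nodes. The composition $\varphi \circ \mu \colon V \to \mathrm{leaves}(U)$ is then the desired map into the universal tree; each $p$-truncation of $\mu(v)$ corresponds, under $\varphi$, to the $p$-truncation of $\varphi(\mu(v))$, so the inequalities in the progress condition are preserved and $\varphi \circ \mu$ is itself a progress measure when $U$ is viewed as the ambient tree.

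The proof is essentially bookkeeping, and the only place where care is needed is to check that the pruning in step one does yield an ordered tree with at most $n$ leaves, and that the embedding $\varphi$ really does preserve the data used in the progress condition (truncations and the induced lexicographic order). Both are immediate once one spells out the definitions, so I do not expect a genuine obstacle; the content of the proposition is really the packaging of the ``straightforward but fruitful observation'' of~\cite{JL17} into a form that is ready to be combined with any universal-tree construction, and in particular with the quasi-polynomial one used by the progress measure lifting algorithm.
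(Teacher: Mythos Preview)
Your proposal is correct and follows exactly the approach the paper intends: the paper does not give a standalone proof but simply states that the proposition ``follows directly from the above `straightforward but fruitful' observation and the definition of a universal tree,'' and you have spelled out precisely those two steps (pruning to a tree with at most $n$ leaves, then embedding into the universal tree). Your additional care in verifying that the embedding preserves $p$-truncations and the lexicographic order---and hence the progress condition---is more detail than the paper provides, but entirely in line with its intent.
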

We offer the following interpretation of the
\emph{small progress measure lifting}~\cite{Jur00} and the
\emph{succinct progress measure lifting}~\cite{JL17} algorithms,
highlighting the central role played in them by the concept of
universal trees that we introduce here.  
Both algorithms perform an iterative search for a progress measure
(a witness for a winning strategy for Even), and their search
spaces are limited by restricting the labels considered in candidate
tree labellings to leaves in specific $(n, d/2)$-universal trees. 
Where the two algorithms differ is the universal trees they use: 
in the former algorithm it is the full $n$-ary tree of height $d/2$, 
and in the latter it is an ordered tree of height $d/2$ in which the
branching directions are bit strings, where for every node, the total
number of bits used in all its branching directions is bounded by
$\ceil{\lg n}$. 

We observe that what is common for both algorithms is that their
correctness relies precisely on the universality property of the
ordered tree from which the candidate labels are taken from.
Indeed, if there is a witness for a winning strategy for Even in the
form of a progress measure
(whose existence is guaranteed by Theorem~\ref{thm:pm-witness-ws}),
then by Proposition~\ref{prop:pm-into-ut}, it suffices to look for one
that uses only leaves of a universal tree as labels.
The original papers~\cite{Jur00,JL17} contain the technical details 
of the iterative scheme that both algorithms use.
Here, we highlight the following two key insights about the design and
analysis of those algorithms:
\begin{compactitem}
\item
  each algorithm computes a sequence of tree labellings that is
  monotonically increasing
  (w.r.t.\ the pointwise lexicographic order),  
  while maintaining the following invariant: 
  the current labelling is pointwise lexicographically smaller than
  or equal to every progress measure;
\item
  the worst-case run-time bound for the algorithm is determined 
  (up to a small polynomial factor) by the size of the universal
  tree used. 
\end{compactitem}

Using our interpretation of the progress measure lifting algorithms
and the terminology of universal trees, the small progress measure
lifting algorithm~\cite{Jur00} is using a tree whose universality is 
straightforward, and whose size---$\Theta(n^{d/2})$---is exponential
in the number of priorities.
On the other hand, the main technical contribution of Jur\-dzi\'nski
and Lazi\'c---that yields their quasi-polynomial succinct progress
measure lifting algorithm~\cite[Theorem~7]{JL17}---can be stated as
the following quasi-polynomial upper bound on the size of universal 
trees. 

\begin{theorem}[{\cite[Lemmas~1 and~6]{JL17}}]
\label{thm:small-universal-tree}
  For all positive integers~$\ell$ and~$h$, there is an  
  $(\ell, h)$-universal tree with at most quasi-polynomial number of 
  leaves. 
  More specifically, the number of leaves is at most 
  $2\ell {\lceil \lg \ell \rceil + h + 1 \choose h}$, which is  
  polynomial in~$\ell$ if $h = O(\log \ell)$; 
  it is $O\left(h \ell^{\lg(h/{\lg \ell})+1.45}\right)$ if
  $h = \omega(\log \ell)$; 
  and---more crude\-ly---it is always $\ell^{\lg h + O(1)}$. 
\end{theorem}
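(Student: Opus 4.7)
The plan is to construct an explicit $(\ell, h)$-universal tree $U(\ell, h)$ by recursion, bound its number of leaves by induction using Pascal's identity, and verify its universality by a second induction based on a halving decomposition of the target tree.

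\emph{Construction.} I would define $U(\ell, h)$ recursively: $U(\ell, 0)$ is a single leaf, $U(1, h)$ is a path of $h$ edges, and for $\ell \geq 2$ and $h \geq 1$ the root of $U(\ell, h)$ has children formed by concatenating, in order, the children of the root of $U(\lfloor \ell/2 \rfloor, h)$ (carrying their subtrees), then a single middle child whose subtree is $U(\ell, h - 1)$, then the children of the root of $U(\lceil \ell/2 \rceil, h)$ (again carrying their subtrees). Writing $N(\ell, h)$ for the number of leaves of $U(\ell, h)$, this yields
\[
N(\ell, h) \leq N(\lfloor \ell/2 \rfloor, h) + N(\ell, h - 1) + N(\lceil \ell/2 \rceil, h).
\]

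\emph{Size bound.} I would prove by induction on $\ell + h$ that $N(\ell, h) \leq 2\ell \binom{\lceil \lg \ell \rceil + h + 1}{h}$. The arithmetic fact driving the inductive step is that $\lceil \lg \lceil \ell / 2 \rceil \rceil \leq \lceil \lg \ell \rceil - 1$ for every $\ell \geq 2$, which lets the two ``horizontal'' terms of the recurrence contribute, via the inductive hypothesis, a combined $2\ell \binom{\lceil \lg \ell \rceil + h}{h}$. The ``vertical'' term gives $2\ell \binom{\lceil \lg \ell \rceil + h}{h - 1}$, and Pascal's identity $\binom{k+h}{h} + \binom{k+h}{h-1} = \binom{k+h+1}{h}$ collapses the two into the target bound. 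The asymptotic estimates in the theorem then follow by standard binomial manipulations: the binomial is polynomial in $\ell$ when $h = O(\log \ell)$; a Stirling-based optimisation in the regime $h = \omega(\log \ell)$ yields the $\ell^{\lg(h / \lg \ell) + 1.45}$ bound; and the crude $\ell^{\lg h + O(1)}$ bound holds always.

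\emph{Universality.} I would prove by induction on $\ell + h$ that every ordered tree $T$ of height at most $h$ with at most $\ell$ leaves embeds into $U(\ell, h)$. The base cases $\ell = 1$ and $h = 0$ are immediate. For the inductive step, let $T_1, \ldots, T_m$ be the subtrees hanging off the children of the root of $T$, with leaf counts $\ell_1, \ldots, \ell_m$, and let $i$ be the smallest index with $\ell_1 + \cdots + \ell_i \geq \lceil \ell / 2 \rceil$. Then $T_1, \ldots, T_{i-1}$ together have fewer than $\lceil \ell / 2 \rceil$ leaves and embed into $U(\lfloor \ell / 2 \rfloor, h)$ by induction; likewise $T_{i+1}, \ldots, T_m$ embed into $U(\lceil \ell / 2 \rceil, h)$; and the single subtree $T_i$, whose height is at most $h - 1$, embeds into $U(\ell, h - 1)$ by induction on $h$. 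Concatenating the three embeddings along the left, middle, and right blocks of $U(\ell, h)$'s root's children, exactly as laid out by the construction, produces an order-preserving embedding of $T$.

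\emph{Main obstacle.} The chief difficulty will be calibrating the constants in the size bound so that the induction closes cleanly: the factor $2\ell$ and the offset $+1$ inside the binomial coefficient are not accidental, and getting either wrong by one derails the Pascal-identity step. Related bookkeeping is needed when $\ell$ is odd, since the recursion splits asymmetrically into $\lfloor \ell/2 \rfloor$ and $\lceil \ell/2 \rceil$, and one must check that the logarithm inequality above holds in every such case. The universality argument itself is conceptually easy once the construction is fixed; the technical heart of the theorem is really the interplay between the construction and the tight inductive size accounting.
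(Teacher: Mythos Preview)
The paper does not give its own proof of this theorem; it is quoted from Jurdzi\'nski and Lazi\'c~\cite{JL17} and only the \emph{lower} bound (Theorem~\ref{thm:lower_bound_universal_tree}) is proved here. Your proposal is a faithful reconstruction of the argument in the cited source: the recursive ``halving'' construction with a central height-$(h-1)$ spine flanked by two half-width copies, the size bound closed via Pascal's identity, and the universality proof by choosing a pivot child that balances the leaf counts on either side. The arithmetic check $\lceil \lg \lceil \ell/2 \rceil \rceil \leq \lceil \lg \ell \rceil - 1$ and the use of $2\ell$ rather than~$\ell$ to absorb the floor/ceiling asymmetry are exactly the calibration points you flag, and they are handled correctly. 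So there is nothing to compare against in this paper, and your proposal matches the original~\cite{JL17} proof.
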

A natural question then arises, whether one can significantly improve
the worst-case upper bounds on the run-time complexity of solving
parity games by designing significantly smaller universal trees.
We give a negative answer to this question in the next section.

\subsection{Smallest universal trees are quasi-poly\-nomial.} 
\label{sec:lower-universal}

The main technical result in this section is a quasi-polynomial lower
bound on the size of universal ordered trees that matches the
upper bound in Theorem~\ref{thm:small-universal-tree} 
up to a small polynomial factor.
It follows that the smallest universal ordered trees have
quasi-polynomial size, and hence the worst-case performance of
progress measure lifting algorithms~\cite{Jur00,JL17} cannot be
improved to sub-quasi-polynomial by designing smaller universal
ordered trees.\footnote{The 
  quasi-polynomial lower bound on the size of universal trees has
  appeared in the technical report~\cite{Fij18}, which is subsumed by
  this paper.}

\begin{theorem}\label{thm:lower_bound_universal_tree}
  For all positive integers~$\ell$ and~$h$, every
  $(\ell, h)$-universal tree has at least
  $\binom{\floor{\lg \ell} + h-1}{h-1}$ leaves, 
  which is at least $\ell^{\lg(h/{\lg \ell}) - 1}$ provided that
  $2h \leq \ell$.
\end{theorem}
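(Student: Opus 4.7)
The plan is to prove the theorem by induction on the height $h$.

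The base case $h = 1$ is immediate: any $(\ell, 1)$-universal tree must embed the star with $\ell$ leaves, and therefore has at least $\ell \geq 1 = \binom{\floor{\lg \ell}}{0}$ leaves.

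For the inductive step, let $T$ be an $(\ell, h)$-universal tree, and let $T_1, T_2, \ldots, T_k$ be its root-subtrees listed in their natural left-to-right order. For each $i$, define $\ell_i$ to be the largest integer $\ell'$ such that $T_i$ is $(\ell', h-1)$-universal (setting $\ell_i = 0$ if no such $\ell'$ exists). By the inductive hypothesis applied to each $T_i$, we have $|T_i| \geq \binom{\floor{\lg \ell_i} + h - 2}{h - 2}$ whenever $\ell_i \geq 1$.

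The key combinatorial step, which I would formulate as a lemma, is the following claim on the sequence $(\ell_i)_i$: for each $c \in \{0, 1, \ldots, \floor{\lg \ell}\}$, the number of indices $i$ with $\ell_i \geq 2^c$ is at least $\floor{\lg \ell} - c + 1$. Granted this claim, a greedy procedure (scanning $c$ from largest to smallest and avoiding already-chosen indices) delivers distinct indices $i_0, i_1, \ldots, i_{\floor{\lg \ell}}$ with $\ell_{i_c} \geq 2^c$ for every $c$. Summing the per-subtree lower bounds and invoking the hockey-stick identity then gives
\[
|T| \;\geq\; \sum_{c=0}^{\floor{\lg \ell}} |T_{i_c}| \;\geq\; \sum_{c=0}^{\floor{\lg \ell}} \binom{c + h - 2}{h - 2} \;=\; \binom{\floor{\lg \ell} + h - 1}{h - 1},
\]
closing the induction; the asymptotic consequence $\ell^{\lg(h/\lg \ell) - 1}$ under $2h \leq \ell$ is then a routine estimate of the binomial coefficient.

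The main obstacle will be proving the key combinatorial claim. The most natural attempt is to build, for each $c$, a tree $S^c$ whose root has $\floor{\lg \ell} - c + 1$ children, each being a minimum-size $(2^c, h-1)$-universal tree $U_c$; embedding $S^c$ in $T$ would exhibit the required number of root-subtrees of $T$ that contain a copy of $U_c$, and since an embedding of a universal tree transfers universality along the composition, each such host subtree would itself be $(2^c, h-1)$-universal. The subtlety is that the inductive hypothesis lower-bounds $|U_c|$ but does not directly upper-bound it, so $|S^c|$ may exceed the leaf budget $\ell$ in regimes where $h$ is large relative to $\ell$. Circumventing this requires a more delicate witness-tree design---for instance, asymmetric layouts in which only one child of the root of $S^c$ carries the full universality burden while the others are lightweight ``placeholders'' of small height, or aggregating universality demands for different values of $c$ into a single budget-aware witness tree---and I expect this to be where the technical heart of the proof lies.
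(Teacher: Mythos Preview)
Your key claim is false. Take $\ell=2$, $h=3$, $c=1$: the claim asserts that some root-subtree of every $(2,3)$-universal tree is $(2,2)$-universal. Consider the tree $T$ whose root has two children, the left subtree being $II$ (a root with two children, each having a single child) and the right subtree being $Y_1$ (a root with one child that has two children). A direct check shows $T$ is $(2,3)$-universal---the ``branch at depth~$2$'' shape is hosted on the right and all ``branch at depth~$1$'' shapes on the left---yet $II$ does not embed $Y_1$ (its depth-$1$ nodes each have only one child) and $Y_1$ does not embed the two-leaf star (its root has only one child), so neither root-subtree is $(2,2)$-universal and $\ell_1=\ell_2=1$. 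The size obstacle you identified for the witness $U_c$ is thus a symptom of a deeper problem: universality need not concentrate in any single root-subtree, because different target shapes may be hosted by different children of the root. No asymmetric or aggregated witness design can repair this, since the conclusion you are aiming for simply does not hold.

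The paper's argument works from the bottom of the tree rather than the top. For each $\delta\in\{1,\dots,\ell\}$ it removes from $T$ all leaves and all depth-$(h{-}1)$ nodes of degree below~$\delta$, obtaining a height-$(h{-}1)$ tree $T_\delta$, and shows that $T_\delta$ is $(\lfloor\ell/\delta\rfloor,\,h{-}1)$-universal: given any such tree $t$, append $\delta$ children to each leaf; the resulting tree has at most $\ell$ leaves and height~$h$, so it embeds in $T$, and the images of the leaves of $t$ land on degree-$\ge\delta$ nodes and hence survive in $T_\delta$. The witness stays within the leaf budget for \emph{every} $t$, so universality of $T_\delta$ is established directly, with no need to embed a universal object. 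Counting leaves of $T$ as the sum of degrees at depth $h{-}1$ then yields $|T|\ge\sum_{\delta=1}^{\ell} g(\lfloor\ell/\delta\rfloor,h{-}1)$; restricting to $\delta=2^0,2^1,\dots,2^{\lfloor\lg\ell\rfloor}$ recovers exactly your hockey-stick sum.
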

This lower bound result shares some similarities with a result of 
Goldberg and Lifschitz~\cite{GL68}, which is for universal trees of a
different kind: 
the height is not bounded and the trees are not ordered.  

\begin{proof}
  First, we give a derivation of the latter bound from the former;
  we show that
  \[
  \binom{\floor{\lg \ell} + h-1}{h-1} = 
  \binom{\floor{\lg \ell} + h-1}{\floor{\lg \ell}} \geq
    \ell^{\lg(h/\lg \ell) - 1}
  \]
  provided that $2h \leq \ell$. 
  We start from the inequality 
  $\left(\frac{k}{\ell}\right)^\ell \le \binom{k}{\ell}$
  applied to the binomial coefficient
  $\binom{\floor{\lg \ell} + h - 1}{\floor{\lg \ell}}$, and take the
  $\lg$ of both sides. 
  This yields
  \begin{multline*}
    \lg \binom{\floor{\lg \ell} + h - 1}{\floor{\lg \ell}} 
    \geq \\
    \geq 
    \floor{\lg \ell} \cdot
    \Big[
      \lg\big(\floor{\lg \ell} + h - 1\big) - \lg \floor{\lg \ell}
    \Big] \\
    \geq 
    (\lg \ell - 1) \cdot
    (\lg h - \lg \lg \ell) \\
    \geq
    \lg \ell \cdot
    \big(\lg (h / \lg \ell) - 1\big), 
  \end{multline*}
  where the second inequality follows since $\ell \geq 2$,
  and the third by the assumption that $2h \leq \ell$.

To prove the first bound, we proceed by induction and show that any
$(\ell, h)$-universal tree has at least $g(\ell, h)$ leaves, where 
\[
g(\ell, h)  =
\sum_{\delta = 1}^{\ell} g(\lfloor \ell / \delta\rfloor, h-1), 
\]
$g(\ell, 1) = \ell$, and $g(1, h) = 1$.

The bounds are clear for $h = 1$ or $\ell = 1$.

Let $T$ be a $(\ell, h)$-universal tree, and
$\delta \in \left\{1, \ldots, \ell\right\}$.   
We claim that the number of nodes at depth $h-1$ of degree greater
than or equal to $\delta$ is at least
$g(\lfloor \ell / \delta \rfloor, h-1)$. 

Let $T_\delta$ be the subtree of $T$ obtained by removing all leaves
and all nodes at depth $h-1$ of degree less than $\delta$: the leaves
of the tree $T_\delta$ have depth exactly $h-1$. 

We argue that $T_\delta$ is
$(\lfloor \ell / \delta \rfloor, h-1)$-universal. 
Indeed, let $t$ be a tree with $\lfloor \ell / \delta \rfloor$ leaves
all at depth $h-1$. 
To each leaf of $t$ we append $\delta$ children, yielding the tree
$t_+$ which has $\lfloor \ell / \delta \rfloor \cdot \delta \le \ell$ 
leaves all at depth~$h$.
Since $T$ is $(\ell, h)$-universal, the tree $t_+$ embeds into $T$.
Observe that the embedding induces an embedding of $t$ into
$T_\delta$, since the leaves of $t$ have degree $\delta$ in $t_+$,
hence are also in $T_\delta$. 

Let $\ell_\delta$ be the number of nodes at depth $h-1$ with degree
exactly $\delta$. 
So far we proved that the number of nodes at depth $h-1$ 
of degree greater than or equal to $\delta$ is at least
$g(\lfloor \ell / \delta \rfloor, h-1)$, 
so 
\[
\sum_{i=\delta}^{\ell} \ell_i \geq g(\lfloor \ell / \delta \rfloor, h-1).
\]
Thus the number of leaves of $T$ is 
\[
  \sum_{i=1}^\ell \ell_i \cdot i =
  \sum_{\delta=1}^{\ell} \sum_{i=\delta}^{\ell} \ell_i \geq
  \sum_{\delta=1}^{\ell} g(\lfloor \ell / \delta \rfloor, h-1) =
  g(\ell, h).  
\]

It remains to prove that
\begin{equation}
\label{equation:glh-binom}
  g(\ell, h) \ge \binom{\floor{\lg \ell} + h - 1}{\floor{\lg \ell}}.
\end{equation}
Define $G(p, h) = g(2^p, h)$ for $p \ge 0$ and $h \ge 1$.
Then we have
\[
  \begin{array}{lll}
    G(p, h) & \ge & \sum_{k=0}^p G(p-k, h-1), \\
    G(p, 1) & \ge & 1, \\
    G(0, h) & = & 1.
  \end{array}
\]
To obtain a lower bound on $G$ we define $\overline{G}$ by
\[
  \begin{array}{lll}
    \overline{G}(p, h) & = & \overline{G}(p, h-1) + \overline{G}(p-1, h), \\
    \overline{G}(p, 1) & = & 1, \\
    \overline{G}(0, h) & = & 1,
  \end{array}
\]
so that $G(p, h) \ge \overline{G}(p, h)$.
We verify by induction that $\overline{G}(p,h) = \binom{p+h-1}{p}$,
which follows from Pascal's identity  
\[
\binom{p+h-1}{p} = \binom{p+h-2}{p} + \binom{p+h-2}{p-1}.
\]
This implies that $G(p, h) \ge \binom{p+h-1}{p}$, from
which~(\ref{equation:glh-binom}) follows. 
\end{proof}

\section{The separation approach}
\label{sec:prelim}

\subsection{Languages of play encodings.} 
\label{sec:langs}

The outcome of the two players interacting in a parity game by making
moves is an infinite path in the game graph.  
We encode such infinite paths as infinite words over the alphabet
$\Sigma_d = \set{1, 2, \dots, d}$ in a natural way:
each move---in which an edge~$e$ is followed---is encoded by the
letter $\pi(e)$, i.e., the priority of edge~$e$. 

We write $\limsupeven_d$ for the set of infinite words 
in which the biggest number that
occurs infinitely many times is even, and we write $\limsupodd_d$
for the set of infinite words in which that number is odd. 
Observe that sets $\limsupeven_d$ and $\limsupodd_d$ form a
partition of the set $(\Sigma_d)^\omega$ of all infinite words over the
alphabet~$\Sigma_d$. 
As intended, an infinite play in a parity game graph 
(of arbitrary size) 
with edge priorities not exceeding~$d$ is winning for Even if and only
if the infinite-word encoding of the play is in~$\limsupeven_d$. 

Recall that a (parity game) graph is called even if every cycle in it
is even 
(i.e., the highest priority that occurs on the cycle is even).  
For all positive integers~$n$ and~$d$, we define the language 
$\evencycles_{n, d} \subseteq (\Sigma_d)^\omega$ to consist of
infinite words that encode an infinite path in an even graph with at
most~$n$ vertices and~$d$ priorities.  
The languages $\evencycles_{n, d}$ can be thought of as finitary 
under-approximations of the language $\limsupeven_d$ because  
\[
  \evencycles_{1, d} \subsetneq \evencycles_{2, d} \subsetneq \cdots
  \subsetneq \limsupeven_d. 
\]
Languages $\oddcycles_{n, d}$---that can be thought of as finitary 
under-approximations of the language $\limsupodd_d$---are defined in
an analogous way.

\subsection{Safety automata and games. } 
\label{sec:sep-approach}

The fundamental and simple model that the statement of our main
technical result formally refers to is a (non-deterministic)
\emph{safety automaton}.   
Superficially, it closely resembles the classic model of 
\emph{finite automata}: each safety automaton has a finite set of
\emph{states}, a designated \emph{initial state}, and a 
\emph{transition relation}. 
(Without loss of generality, we assume that the transition relation is
\emph{total}, i.e., for every state~$s$ and letter~$a$, there is at
least one state~$s'$, such that the triple $(s, a, s')$ is in the
transition relation.)
The differences between our model of safety automata and the classic
model of finite automata with designated \emph{accepting states} are
as follows: 
\begin{compactitem}
\item
  safety automata are meant to accept or reject 
  \emph{infinite words}, not finite ones; 
\item
  a safety automaton does not have a designated set of accepting
  states; 
  instead it has a designated set of \emph{rejecting states};
\item
  a safety automaton \emph{accepts} an infinite word if there is an
  infinite run of the automaton on the word in which no rejecting
  state ever occurs;
  otherwise it \emph{rejects} the infinite word. 
\end{compactitem}
We say that a safety automaton is \emph{deterministic} if the
transition relation is a function: 
for every state~$s$ and letter~$a$, there is a unique state~$s'$, such
that the triple $(s, a, s')$ is a transition. 

Finally, we define the elementary concept of \emph{safety games},
which are played by two players on finite directed graphs in a similar
way to parity games, but the goals of the players are simpler than in
parity games: 
the \emph{safety player} wins if a designated set of 
\emph{unsafe vertices} is never visited, and otherwise the opponent
(sometimes called the \emph{reachability player}) wins.

\subsection{Safety separating automata.}
\label{sec:simple-separator}

For positive integers $n$ and~$d$, we say that an automaton~$\A$ is an
\emph{$(n, d)$-separator} if it is a separator of $\evencycles_{n, d}$
and $\oddcycles_{n, d}$; 
and we say that it is a \emph{strong $(n, d)$-separator} if it is a
separator of $\evencycles_{n, d}$ and $\limsupodd_d$. 
Note that if an automaton is a strong $(n, d)$-separator then it is
also an $(n, d)$-separator, but not every $(n, d)$-separator is
strong. 

To illustrate the concept of a (strong) $(n, d)$-separator, 
we present a simple ``multi-counter'' strong $(n, d)$-separator that
is implicit in the work of Bernet et al.~\cite{BJW02}.  
We define the automaton~$\C_{n, d}$ that, for every odd priority~$p$,  
$1 \leq p \leq d-1$, keeps a counter $c_p$ that stores the number of
occurrences of priority~$p$ since the last occurrence of a priority
larger than~$p$ (even or odd). 
It is a safety automaton: it rejects a word immediately once the
integer stored in any of the $d/2$ counters exceeds~$n$. 

In fact, instead of ``counting up'' (from~$0$ to~$n$), we prefer to
``count down'' (from~$n$ to~$0$), which is equivalent, but it aligns
better with the definition of progress measures. 
More formally, we define the deterministic safety 
automaton~$\C_{n, d}$ in the following way:
\begin{compactitem}
\item
  the set of states of~$\C_{n, d}$ is the set of $d/2$-sequences 
  $\seq{c_{d-1}, c_{d-3}, \dots, c_1}$, such that $c_p$ is an integer
  such that $0 \leq c_p \leq n$ for every odd $p$, 
  $1 \leq p \leq d-1$;  
  and it also contains an additional \emph{rejecting}
  state~$\mathtt{reject}$; 
\item
  the initial state is $\seq{n, n, \dots, n}$; 
\item
  the transition function~$\delta$ is defined so that
  $\delta\big(\mathtt{reject}, p\big) = \mathtt{reject}$ for
  all $p \in \Sigma_d$, and
  $\delta\big(\seq{c_{d-1}, c_{d-3}, \dots, c_1}, p\big)$ is equal to:
  \begin{compactitem}
  \item
    $\seq{c_{d-1}, c_{d-3}, \dots, c_{p+1}, n, \dots, n}$ if $p$ is
    even, 
  \item
    $\seq{c_{d-1}, c_{d-3}, \dots, c_p-1, n, \dots, n}$
    if $p$ is odd and $c_p > 0$,
  \item
    $\mathtt{reject}$ 
    if $p$ is odd and $c_p = 0$. 
  \end{compactitem}
\end{compactitem}
Note that the size of automaton~$\C_{n, d}$ is $\Theta(n^{d/2})$.

\begin{proposition}[\cite{BJW02,BC18}] 
\label{prop:simple-separator}
  The safety automaton $\C_{n, d}$ is a strong $(n, d)$-separator. 
\end{proposition}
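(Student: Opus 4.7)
\medskip

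The plan is to handle the two inclusions $\evencycles_{n,d} \subseteq L(\C_{n,d})$ and $L(\C_{n,d}) \cap \limsupodd_d = \emptyset$ separately; since $\C_{n,d}$ is deterministic, each word has exactly one run and the two inclusions can be analyzed by tracking that run.

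The rejection direction (emptiness of $L(\C_{n,d}) \cap \limsupodd_d$) is the easier one and I would address it first. Fix $w \in \limsupodd_d$ and let $p$ be the largest odd priority that appears infinitely often in $w$; by choice of $p$, no priority strictly greater than $p$ occurs after some finite prefix. From that point onwards, the transition function never rewrites the $c_p$ component except by decrementing it whenever $p$ is read. Since $p$ is read infinitely often and $c_p$ is a nonnegative integer bounded by $n$, after at most $n+1$ further occurrences of $p$ the automaton is forced into the \texttt{reject} sink, so $w \notin L(\C_{n,d})$.

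The acceptance direction is where the real content sits. I would prove the contrapositive: if the run of $\C_{n,d}$ on $w$ reaches \texttt{reject}, then $w \notin \evencycles_{n,d}$. Suppose the rejection happens on the $t$-th letter, which must be an odd priority $p$ read in a state with $c_p = 0$. The mechanics of the transition function give the clean invariant that, at any point of the run, $c_p = n - k$, where $k$ is the number of occurrences of priority $p$ since the last time a priority strictly greater than $p$ was read (or since the start). Applied just before the rejecting step, this invariant says that the window of the word ending at position $t$ and beginning either at the start or right after the last priority-$(>p)$ letter contains at least $n+1$ occurrences of $p$ and no priority exceeding $p$.

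Now I would translate this back into the putative even graph. If $w \in \evencycles_{n,d}$, then $w$ encodes an infinite path $v_0, v_1, v_2, \dots$ in some even graph $G$ with at most $n$ vertices. Consider the $n+1$ source vertices of the $p$-edges inside the window identified above: by pigeonhole two of them coincide, so the subpath between them is a cycle in $G$ that uses at least one edge of priority $p$ and uses only edges of priority at most $p$. The largest priority on this cycle is therefore exactly $p$, which is odd, contradicting the assumption that $G$ is even. The main obstacle, if any, is purely bookkeeping: keeping the window-based invariant precise enough that the $n+1$ witnessing positions and the cycle they induce are unambiguous, especially handling boundary cases at the very start of the run and immediately after a priority-$(>p)$ letter resets $c_p$ to $n$.
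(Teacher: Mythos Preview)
Your proposal is correct and follows essentially the same approach as the paper's proof: both argue the contrapositive for the acceptance direction (a rejection exposes a window with $n{+}1$ occurrences of some odd $p$ and no larger priority, yielding an odd cycle via pigeonhole on vertices), and both argue the rejection direction by fixing the largest odd priority~$p$ occurring infinitely often and observing that $c_p$ is eventually only decremented. You present the two directions in the opposite order and make the counter invariant $c_p = n-k$ and the pigeonhole step slightly more explicit than the paper does, but there is no substantive difference.
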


\begin{proof}
  Firstly, we argue that if the unique run of~$\C_{n, d}$ on an infinite
  word contains an occurrence of the rejecting state then the word is
  not in $\evencycles_{n, d}$.
  Indeed, the only reason for the unique run of~$\C_{n, d}$ to reach the
  rejecting state is that the state reached after reading some prefix of 
  the word is 
  $\seq{c_{d-1}, c_{d-3}, \dots, c_1}$ with $c_p = 0$ for an odd~$p$,
  and~$p$ is subsequently read. 
  For this to happen, there must be a suffix of
  the prefix in which there are~$n$ occurrences of priority~$p$ and no
  priority higher than~$p$ occurs, and the currently read letter is the
  $(n+1)$-st occurrence of priority~$p$ in the prefix.
  We argue that if the input word is an encoding of an infinite play
  in an even graph with at most~$n$ vertices then---by the pigeonhole
  principle---there is a cycle in the graph in which the highest
  priority is~$p$, which contradicts the assumption that the graph was
  even. 
  It follows that the infinite word is not in~$\evencycles_{n, d}$. 

  Secondly, we argue that if a word is in $\limsupodd_d$ then the
  unique run of~$\C_{n, d}$ on the word contains an occurrence of the
  rejecting state. 
  Consider an infinite suffix of the word in which all priorities occur
  infinitely many times. 
  Unless the unique run reached the rejecting state on the corresponding
  prefix already, let $\seq{c_{d-1}, c_{d-3}, \dots, c_1}$ be the state
  reached in the unique run at the end of the prefix.  
  By the assumption that the word is in $\limsupodd_d$, the highest
  priority that occurs in the suffix is odd and it occurs infinitely
  many times.   
  Take the shortest prefix of the suffix in which the highest priority
  $p$ occurs $n - c_p + 1$ times.
  The unique run of~$\C_{n, d}$ on the original infinite word reaches
  the rejecting state upon reading that prefix. 
\end{proof}

\subsection{The separation approach.}
We now explain how safety separating automata 
allow to reduce the complex task of 
solving a parity game to the (conceptually and algorithmically)
straightforward task of solving a safety game, by exploiting
positional determinacy of parity games.   
This is the essence of the \emph{separation approach} that implicitly
underpins the algorithms of Bernet, Janin, and
Walukiewicz~\cite{BJW02} and of Calude et al.~\cite{CJKLS17}, as
formalized by Boja\'nczyk and 
Czerwi\'nski~\cite[Chapter 3]{BC18}.
Here, we only consider the simple case of \emph{deterministic}
automata. 
We postpone the discussion of using non-deterministic automata in the
separation approach to
Sections~\ref{section:non-deterministic-separators}
and~\ref{section:separator-from-register-games}, which is the  
only place where non-determinism seems to be needed. 

Given a parity game~$G$ with at most $n$ vertices and priorities up
to~$d$, and a deterministic safety automaton~$\A$ 
with input alphabet $\Sigma_d$,
we define a safety game as the \emph{chained product} $G \rhd \A$, in
which 
\begin{compactitem}
\item
  the dynamics of play and ownership of vertices is inherited from
  the parity game~$G$; 
\item
  the automaton~$\A$ is fed the priorities of the edges corresponding
  to the moves made by the players;
\item
  the safety winning condition is the safety acceptance condition of
  the automaton~$\A$. 
\end{compactitem}

\begin{proposition}
\label{prop:safety-game-equiv}
  If $G$ is a parity game with~$n$ vertices and priorities up to~$d$,
  and if $\A$ is a deterministic safety $(n, d)$-separator,  
  then Even has a winning strategy in~$G$ if and only if she has a
  winning strategy in the chained-product safety 
  game~$G \rhd \A$.      
\end{proposition}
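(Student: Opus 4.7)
The plan is to apply positional determinacy of parity games to both directions, reducing each to a clean argument about the strategy subgraphs of positional strategies. The structural observation I rely on is that the chained product $G \rhd \A$ inherits vertex ownership from $G$, so any (positional) strategy for a player in $G$ can be lifted verbatim to $G \rhd \A$ by ignoring the automaton state; conversely, the projection of any play in $G \rhd \A$ to the $G$-component is a play in $G$, and the word fed to $\A$ is exactly the priority-encoding of that play.

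For the forward direction, I would assume Even wins $G$, invoke positional determinacy to obtain a positional winning strategy $\sigma$, and consider any play in $G \rhd \A$ in which Even follows $\sigma$. The projection to $G$ lies entirely in the strategy subgraph of $\sigma$; this subgraph has at most $n$ vertices and, as recalled in the excerpt, every cycle in it is even. Hence the priority-encoding of the play belongs to $\evencycles_{n,d}$, and because $\A$ is a separator we have $\evencycles_{n,d} \subseteq L(\A)$, so the (unique, since $\A$ is deterministic) run of $\A$ never enters a rejecting state. Thus $\sigma$ is winning for Even in the safety game $G \rhd \A$ as well.

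For the converse, I would argue by contrapositive. Suppose Even does not win $G$; then by positional determinacy, Odd has a positional winning strategy $\tau$ in $G$ whose strategy subgraph has at most $n$ vertices and only odd cycles. Let Even play in $G \rhd \A$ any strategy she likes (in particular, an allegedly winning one), while Odd plays $\tau$. The resulting play projects into the strategy subgraph of $\tau$, so its priority-encoding belongs to $\oddcycles_{n,d}$. Since $\A$ is an $(n,d)$-separator, $L(\A) \cap \oddcycles_{n,d} = \emptyset$, so $\A$ rejects the word, meaning the run of $\A$ eventually visits a rejecting state and Odd wins this play of $G \rhd \A$. Hence Even has no winning strategy in $G \rhd \A$, completing the contrapositive.

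There is no substantive obstacle; the only points requiring care are to verify that the encoding of a play is literally the input word read by $\A$, and that the parameters $(n,d)$ in the separator match the bounds inherited by the strategy subgraphs (at most $n$ vertices, priorities from $\Sigma_d$). Determinism of $\A$ is used only to make $G \rhd \A$ a well-formed two-player safety game with a unique automaton successor per move; the non-deterministic variant of this reduction is deferred to the later section on non-deterministic separators.
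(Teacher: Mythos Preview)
Your proposal is correct and follows essentially the same approach as the paper: both directions invoke positional determinacy, lift the positional winning strategy of the appropriate player to $G \rhd \A$, and use that the resulting strategy subgraph is an even (respectively odd) graph with at most $n$ vertices so that every play encoding lies in $\evencycles_{n,d}$ (respectively $\oddcycles_{n,d}$). The paper's proof is merely terser, handling the Odd direction by saying it is symmetric rather than spelling out the contrapositive as you do.
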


\begin{proof}
  If Even has a winning strategy in the parity game~$G$
  then---by positional determinacy~\cite{EJ91,Mos91}---she also has one
  that is positional. 
  We argue that if Even uses such a positional winning strategy in~$G$
  when playing the chained-product game~$G \rhd \A$, then she
  also wins the latter. 
  Indeed, the strategy subgraph of the positional winning strategy for
  Even in~$G$ is an even graph with at most~$n$ vertices, and hence
  all words that are fed to the automaton~$\A$ are in 
  $\evencycles_{n, d}$, and hence they are accepted by~$\A$.  
  
  Otherwise, Odd has a positional winning strategy in~$G$, and it can
  be transferred to a winning strategy for him in~$G \rhd \A$ in the
  same way as we argued for Even above. 
\end{proof}

In the rest of the paper, we focus on \emph{strong} 
$(n, d)$-separators for two reasons.  
Firstly---as described in Section~\ref{sec:seps-everywhere}---all
the known quasi-polynomial algorithms for parity games are
underpinned by strong $(n, d)$-separators. 
Secondly, our proof of the quasi-polynomial lower bound in
Section~\ref{sec:proof} only applies to strong $(n, d)$-separators.

\section{Separating automata everywhere}
\label{sec:seps-everywhere}

In this section we argue that the three distinct techniques that have
been developed so far for designing quasi-polynomial algorithms for
solving parity games can be unified as instances of the separation
approach introduced in the previous section. 
The main unifying aspect that we highlight in this section is that all
the three approaches yield constructions of separating automata
of quasi-polynomial size, which provides evidence of significance of
our main technical result:
the quasi-polynomial lower bound on the size of separators
(Theorem~\ref{thm:lower-bound}) forms a barrier that all of those
approaches need to overcome in order to further significantly improve
the known complexity of solving parity games. 
We note that, in contrast to the results of Calude et
al.~\cite{CJKLS17} and Lehtinen~\cite{Leh18}, not all of the proposed
quasi-polynomial algorithms explicitly construct separating automata
or other objects of (at least) quasi-polynomial
size~\cite{JL17,FJSSW17}, but in the worst case, they too enumerate
structures that form the states of the related separating automata
(leaves in a universal tree and play summaries, respectively). 

In Section~\ref{section:separator-from-universal-trees} we generalize
the simple separator described in Section~\ref{sec:simple-separator}
to a construction that creates strong separators from arbitrary
universal trees.
This result implies that the main combinatorial structure underlying 
all the progress measure lifting
algorithms~\cite{Jur00,BJW02,Sch17,CHL17}, including the
quasi-polynomial succinct progress measure lifting of Jurdzi\'nski and
Lazi\'c~\cite{JL17,DJL18}, is intimately linked to the separation
approach. 
In Section~\ref{section:separator-from-play-summaries}, we briefly
discuss the observation of Boja\'nczyk and Czerwi\'nski~\cite{BC18}
that Calude et al.'s~\cite{CJKLS17} play summaries construction can be
straightforwardly interpreted as defining a separating automaton; 
we refer the reader to their very readable technical exposition.  
Finally, in Sections~\ref{section:non-deterministic-separators}
and~\ref{section:separator-from-register-games}, we 
discuss how to adapt the separation approach to also encapsulate the
most recent quasi-polynomial algorithm for solving parity games by
Lehtinen~\cite{Leh18}, based on register games.
This requires care because, unlike the constructions based on play
summaries and universal trees, separating automata that underpin
Lehtinen's reduction from parity games to register games seem to
require non-determinism, and in general,
Proposition~\ref{prop:safety-game-equiv} does not hold for
non-deterministic automata.

\subsection{Separating automata from universal trees.}
\label{section:separator-from-universal-trees}

For positive integers~$n$ and~$d$, such that $d$ is even, let 
$L_{n, d/2}$ be the set of leaves in an $(n, d/2)$-universal tree. 
The definition of the deterministic safety automaton~$\U_{n, d}$ bears
similarity to the definition of the simple ``multi-counter''
separator~$\S_{n, d}$ from Section~\ref{sec:simple-separator}.  
Again, the states are $d/2$-sequences of ``counters'', but the
``counting down'' is done in a more abstract way than in~$\S_{n, d}$,
using instead the natural lexicographic order on the nodes of the
universal tree.   

More formally, we define a deterministic safety automaton~$\U_{n, d}$
in the following way: 
\begin{compactitem}
\item
  the set of states of~$\U_{n, d}$ is the set $L_{n, d/2}$ of leaves
  in the $(n, d/2)$-universal tree together with an additional 
  $\mathtt{reject}$ state;
\item
  the initial state is the largest leaf 
  (in the lexicographic tree order)
  and the only rejecting state is $\mathtt{reject}$;  
\item
  the transition function~$\delta$ is defined so that
  $\delta(\mathtt{reject}, p) = \mathtt{reject}$ for all
  $p \in \Sigma_d$, and if $s \not= \mathtt{reject}$ then
  $\delta(s, p)$ is equal to:
  \begin{compactitem}
  \item
    the largest leaf $s'$ such that $s|_p = s'|_p$ if $p$ is even,
  \item
    the largest leaf $s'$ such that $s|_p > s'|_p$ if $p$ is odd,
  \item
    $\mathtt{reject}$ if $p$ is odd and no such leaf exists. 
  \end{compactitem}
\end{compactitem}

\begin{remark}
  Let $\T_{n, d}$ be an ordered tree in which all the leaves have
  depth~$d/2$ and every non-leaf node has exactly $n$ children;
  note that $\T_{n, d}$ is trivially an $(n, d/2)$-universal tree. 
  An instructive exercise that we recommend is to compare the structures
  of the automaton~$\U_{n, d}$ based on the $(n, d/2)$-universal
  tree~$\T_{n, d}$ and of the simple separating automaton~$\S_{n, d}$
  from Section~\ref{sec:simple-separator}.
\end{remark}

\begin{theorem}
\label{thm:sep-from-ut}
  For every $(n, d/2)$-universal tree, 
  the safety automaton~$\U_{n, d}$ is a strong $(n, d)$-separator.  
\end{theorem}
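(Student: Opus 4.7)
I would establish the two defining properties of a strong separator separately: (i) every word in $\evencycles_{n,d}$ is accepted, and (ii) every word in $\limsupodd_d$ is rejected. The automaton $\U_{n,d}$ has been designed so that its unique (deterministic) run on a word abstractly mirrors the progress condition from Section~\ref{sec:pm-and-ut}: on an even priority $p$, the state's $p$-truncation is preserved and the rest is refreshed to the lexicographic maximum; on an odd priority $p$, the state's $p$-truncation is strictly decreased in lexicographic order, and rejection occurs exactly when no such decrease is possible. The proofs of (i) and (ii) exploit this design in complementary ways: (i) via the existence of a progress measure, and (ii) via a direct pigeonhole argument on the $p$-truncations.

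For (ii), I would fix $w \in \limsupodd_d$, let $p$ denote the greatest priority occurring infinitely often in $w$ (which is odd by hypothesis), and choose an index $k_0$ past which no priority exceeds $p$. Assuming for contradiction that the run never enters $\mathtt{reject}$, let $s_k$ be its state after $k$ letters. For $k \geq k_0$ I would verify two monotonicity properties of $s_k|_p$: reading any priority $p' < p$ does not increase $s_k|_p$ (because the $p'$-truncation is a prefix of the $p$-truncation, so whether $\delta$ preserves or strictly decreases $s_k|_{p'}$, it can only preserve or strictly decrease $s_k|_p$), and reading $p$ itself strictly decreases $s_k|_p$ (directly from the odd case of $\delta$). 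Since the set of $p$-truncations of leaves is finite and $p$ occurs infinitely often after $k_0$, the sequence $(s_k|_p)_{k \geq k_0}$ cannot strictly decrease indefinitely, contradicting the non-rejection assumption.

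For (i), let $w$ encode an infinite path $v_0 v_1 v_2 \cdots$ in an even graph $G$ with at most $n$ vertices and priorities up to $d$. Modelling $G$ as a degenerate parity game in which every vertex is owned by Odd, every infinite play is won by Even: any infinite path eventually stays within a strongly connected subgraph whose maximum priority lies on a cycle and therefore, by the evenness of $G$, is even. Theorem~\ref{thm:pm-witness-ws} then yields a progress measure, and by Proposition~\ref{prop:pm-into-ut} I may assume it maps into the set of leaves of the given $(n, d/2)$-universal tree, giving $\mu : V \to L_{n, d/2}$. I would prove by induction on $k$ the invariant $s_k \geq \mu(v_k)$ in the lexicographic order on leaves, which immediately rules out $s_k = \mathtt{reject}$. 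The base case holds because $s_0$ is the maximum leaf. For the inductive step, with $p = \pi(v_k, v_{k+1})$, I would combine the progress inequality at the edge $(v_k, v_{k+1})$ with the "largest leaf" definition of $\delta$: in the even case, $\mu(v_k)|_p \geq \mu(v_{k+1})|_p$ and $s_k|_p \geq \mu(v_k)|_p$ force $s_{k+1}|_p \geq \mu(v_{k+1})|_p$, whence maximality of $s_{k+1}$ among leaves with truncation $s_k|_p$ gives $s_{k+1} \geq \mu(v_{k+1})$; in the odd case, the strict progress inequality witnesses that a leaf with $p$-truncation strictly below $s_k|_p$ exists (namely $\mu(v_{k+1})$ itself), so $\delta(s_k, p) \neq \mathtt{reject}$, and the same maximality argument yields $s_{k+1} \geq \mu(v_{k+1})$.

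The only subtle technical point is a small auxiliary lemma relating the two lexicographic orders at play: for any leaves $s, s'$ and any priority $p$, if $s \geq s'$ then $s|_p \geq s'|_p$, with equality precisely when $s$ and $s'$ agree on all coordinates at positions $\geq p$. Together with the observation that the lexicographically largest leaf sharing a given $p$-truncation dominates every leaf with that same truncation, this is what allows the invariant $s_k \geq \mu(v_k)$ to survive both transition cases. This lemma is elementary but applied uniformly, and will be the main thing requiring care; the rest of the argument is a routine bookkeeping of truncations and parities.
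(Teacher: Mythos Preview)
Your proposal is correct and mirrors the paper's proof: the same inductive invariant $s_k \geq \mu(v_k)$ (obtained via Theorem~\ref{thm:pm-witness-ws} and Proposition~\ref{prop:pm-into-ut}) for acceptance of $\evencycles_{n,d}$, and the same monotonicity-plus-finiteness argument on $p$-truncations for rejection of $\limsupodd_d$. One minor slip to fix in part~(ii): for $p' < p$ the $p$-truncation is a prefix of the $p'$-truncation, not the other way around---but the conclusion you draw (that controlling $s_k|_{p'}$ forces $s_k|_p$ to be non-increasing) is correct.
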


\begin{proof}
  First we prove that if an infinite word $w \in \Sigma_d$ is
  in~$\evencycles_{n, d}$ then it is accepted by the safety
  automaton~$\U_{n, d}$.  
  Let $G$ be an even graph with at most~$n$ vertices and priorities up
  to~$d$, in which~$w$ occurs as an encoding of an infinite path;
  let $\seq{v_1, v_2, v_3, \dots}$ be the sequence of vertices in the
  infinite path in~$G$. 
  By Theorem~\ref{thm:pm-witness-ws} and
  Proposition~\ref{prop:pm-into-ut}, there is a progress measure~$\mu$
  that maps vertices in~$G$ into the set of leaves~$L_{n, d/2}$ in the
  $(n, d/2)$-universal tree. 
  Consider the unique run $\seq{q_1, q_2, q_3, \dots}$ of~$\U_{n, d}$
  on the word~$w$, where $q_1$ is the initial state of~$\U_{n, d}$ and
  we have $\delta(q_i, \pi(v_i, v_{i+1})) = q_{i+1}$ for all positive
  integers~$i$. 
  By the definition of the initial state in~$\U_{n, d}$, we have that
  $q_1 \geq \mu(v_1)$, and the definition of the transition function
  of~$\U_{n, d}$ allows to establish---by a straightforward induction 
  on~$i$---that $q_i \geq \mu(v_i)$ for all positive integers~$i$.  
  It follows that the run never reaches the $\mathtt{reject}$ state,
  and hence the run is accepting. 

  Now we argue that for every infinite word~$w$ in~$\limsupodd_d$, the 
  unique run of~$\U_{n, d}$ on it eventually reaches the
  $\mathtt{reject}$ state. 
  Our argument is a generalization of the corresponding one in the
  proof of Proposition~\ref{prop:simple-separator}.  
  Consider an infinite suffix~$x = \seq{x_1, x_2, x_3, \dots}$ of the
  word~$w$ in which all priorities occur infinitely many times. 
  Unless the unique run reached the rejecting state on the
  corresponding prefix already, let~$s_1$ be the state reached in the 
  unique run at the end of the prefix. 
  Let $p$ be the highest priority that occurs in~$x$;
  by the assumption that~$w$ is in~$\limsupodd_d$, the priority~$p$ is
  odd. 
  Let $\seq{s_1, s_2, s_3, \dots}$ be the unique run of~$\U_{n, d}$
  on~$x$: for every positive integer~$i$, we have 
  $\delta(s_i, x_i) = s_{i+1}$. 
  Note that $\seq{s_1, s_2, s_3, \dots}$ is a suffix of the unique run  
  of~$\U_{n, d}$ on~$w$. 
  By the definition of the transition function of~$\U_{n, d}$, if
  $s_i$ and $s_{i+1}$ are not equal to the $\mathtt{reject}$ state
  then $s_i|_p \geq s_{i+1}|_p$, and if $x_i = p$ then 
  $s_i|_p > s_{i+1}|_p$. 
  Therefore, unless $s_i = \mathtt{reject}$ for some positive
  integer~$i$, there is an infinite subsequence of
  $\seq{s_1, s_2, s_3, \dots}$ whose $p$-truncations form a strictly
  decreasing sequence, which contradicts finiteness of the 
  set~$L_{n, d/2}$ of states of~$\U_{n, d}$. 
\end{proof}

The following can be obtained by using the nearly optimal
quasi-polynomial $(n, d/2)$-universal trees from 
Theorem~\ref{thm:small-universal-tree} in the construction of the 
automaton $\U_{n, d}$.

\begin{corollary}
  There are deterministic safety auto\-ma\-ta that are strong
  $(n, d)$-separators of size~$n^{\lg d + O(1)}$.  
\end{corollary}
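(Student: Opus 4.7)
The plan is to simply combine the two headline results already established in the excerpt: the construction of a separator from a universal tree (Theorem~\ref{thm:sep-from-ut}) and the quasi-polynomial upper bound on the size of universal trees (Theorem~\ref{thm:small-universal-tree}). No new combinatorial work is needed; the entire content of the corollary is a size calculation.

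First, I would instantiate Theorem~\ref{thm:small-universal-tree} with $\ell = n$ and $h = d/2$ to obtain an $(n, d/2)$-universal tree whose number of leaves is bounded, according to the crude bound stated in that theorem, by $n^{\lg(d/2) + O(1)} = n^{\lg d + O(1)}$. (Absorbing the $-1$ and any additive constants into the $O(1)$ term is the only observation required here.)

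Next, I would feed this universal tree into the construction of the deterministic safety automaton $\U_{n, d}$ of Section~\ref{section:separator-from-universal-trees}. The state space of $\U_{n, d}$ consists of the leaves of the supplied universal tree together with a single extra $\mathtt{reject}$ state, so its size is at most one more than the number of leaves, which is still $n^{\lg d + O(1)}$. Theorem~\ref{thm:sep-from-ut} then guarantees that this automaton is a strong $(n, d)$-separator, and by construction it is deterministic, finishing the proof.

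There is no real obstacle here: the only thing to double-check is that the exponent really is $\lg d + O(1)$ and not something like $\lg(d/2) + O(1)$ with a worse constant; since $\lg(d/2) = \lg d - 1$, both expressions coincide up to the $O(1)$ term, so nothing is lost. Thus the corollary is essentially a one-line consequence of Theorems~\ref{thm:sep-from-ut} and~\ref{thm:small-universal-tree}.
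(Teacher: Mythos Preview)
Your proposal is correct and matches the paper's approach exactly: the corollary is stated immediately after Theorem~\ref{thm:sep-from-ut} with the remark that it follows by plugging the quasi-polynomial $(n, d/2)$-universal trees of Theorem~\ref{thm:small-universal-tree} into the construction of~$\U_{n,d}$. Your size calculation and the observation that $\lg(d/2) = \lg d - 1$ is absorbed into the $O(1)$ are all that is needed.
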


\subsection{Separating automata from play summaries.}
\label{section:separator-from-play-summaries}

Boja\'nczyk and Czerwi\'nski~\cite[Chapter 3]{BC18} give an accessible
exposition of how the breakthrough result of Calude et
al.~\cite{CJKLS17} can be viewed as a construction of a deterministic 
automaton of quasi-polynomial size that separates 
$\evenloops_{n, d}$ from $\limsupodd_d$, which implies separation of
$\evencycles_{n, d}$ from~$\limsupodd_d$.  

One superficial difference between our exposition of separators and
theirs is that we use the model of safety automata, while they
consider the dual model of \emph{reachability automata} instead.
(In reachability automata, an infinite word is accepted if and only if
one of the designated \emph{accepting states} is reached; otherwise it
is rejected.)
If, in Boja\'nczyk and Czerwi\'nski's construction, we swap the roles
of players Even and Odd, and we make the accepting states rejecting,
we get a safety automaton that separates $\evencycles_{n, d}$  
from $\limsupodd_d$. 

\begin{theorem}[\cite{CJKLS17,BC18}] 
  The play summaries data structure of Calude et al.\ yields
  deterministic safety automata that are strong $(n, d)$-separators 
  of size $n^{\lg d + O(1)}$. 
\end{theorem}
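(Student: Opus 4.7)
The plan is to invoke the construction of Calude et al.~\cite{CJKLS17}, translated to the safety framework as in Boja\'nczyk and Czerwi\'nski~\cite{BC18}, and verify two things: separation of $\evencycles_{n,d}$ from $\limsupodd_d$, and the claimed quasi-polynomial size bound. I would proceed in three steps.

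First, I would introduce the play summary. Following Calude et al., a play summary compresses the history of priorities seen so far into a sequence of tagged ``$i$-witnesses'' for $i = 1, \ldots, \ceil{\lg n}$. Each witness is labelled with the highest priority it contains; witnesses are kept in an order in which priorities strictly decrease from oldest to newest, and two adjacent witnesses of the same level $i$ are merged into a single witness of level $i+1$ as soon as this rule is triggered. I would then define the deterministic safety automaton whose states are all syntactically valid play summaries together with a distinguished $\mathtt{reject}$ state; the initial state is the empty summary, and the transition on priority $p$ updates the summary deterministically, moving to $\mathtt{reject}$ exactly when the update would promote to the top level a witness whose maximum priority is odd.

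Second, I would prove the strong separation property. For acceptance of $\evencycles_{n,d}$: if $w$ encodes an infinite path in an even graph with at most $n$ vertices, then reaching $\mathtt{reject}$ would require having accumulated evidence of $n$ disjoint factors each dominated by the same odd priority $p$, with no strictly larger priority appearing in between; by the pigeonhole principle this forces a cycle in the graph whose highest priority is $p$, contradicting evenness. For rejection of $\limsupodd_d$: if $w \in \limsupodd_d$, let $p^*$ be the largest priority occurring infinitely often, which is odd by assumption. After a finite prefix only priorities $\leq p^*$ appear, and $p^*$ recurs forever without ever being flushed by a strictly larger priority, so the witnesses tagged with $p^*$ accumulate monotonically, are promoted through all $\ceil{\lg n}$ levels, and eventually trigger the $\mathtt{reject}$ transition.

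Third, I would bound the number of states: a play summary is, up to the compression invariants, a sequence of at most $\ceil{\lg n}$ witnesses per level across $\ceil{\lg n}$ levels, each labelled by a priority in $\{1, \ldots, d\}$, and a direct combinatorial count gives at most $n^{\lg d + O(1)}$ valid summaries. The main obstacle is getting the compression rule exactly right, so that the same transition function satisfies both directions of separation with this tight bound: too lax a merging rule would make the automaton reject some words in $\evencycles_{n,d}$, while too eager a rule would let some words in $\limsupodd_d$ escape rejection forever. Precisely these invariants are what Calude et al.\ establish and what Boja\'nczyk and Czerwi\'nski repackage, so the proof here is largely translational, dualizing their reachability automaton into a safety one by swapping the roles of Even and Odd and turning designated accepting states into rejecting ones.
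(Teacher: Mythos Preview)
Your proposal is correct and matches the paper's approach. The paper does not actually prove this theorem in detail: it states it as a result imported from \cite{CJKLS17,BC18}, remarks that Boja\'nczyk and Czerwi\'nski's reachability-automaton formulation separates $\evenloops_{n,d}$ (hence $\evencycles_{n,d}$) from $\limsupodd_d$, and observes that swapping the roles of Even and Odd turns their reachability automaton into the desired safety automaton---precisely the dualization you describe in your final paragraph. Your sketch simply fleshes out what that citation stands for, and you appropriately defer the delicate invariants (the exact merging rule and the chain/pigeonhole argument) to the original sources.
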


\subsection{Non-deterministic automata and the separation approach.}
\label{section:non-deterministic-separators}

The possible usage of non-deterministic automata in the separation
approach to solving parity games is less straightforward.
First of all, the game dynamics needs to be modified to explicitly
include the choices that resolve non-determinism in every step. 
We give the power to make those choices to Even, but this extra power
does not suffice to make her win the chained-product game
whenever she has a winning strategy in the original parity game. 
The reason for this failure in transferring winning strategies from
the parity game to the chained-product safety game is that in
arbitrary non-deterministic automata it may be impossible to
successfully resolve non-deterministic choices at a position in the
input word without knowing the letters at the later positions.  
In the game, however, the play is a result of the future choices 
of both the player and her opponent, and the former cannot predict the
latter.

A well-known example of non-deterministic automata for which the 
chained-product game is equivalent to the original game is the
class of \emph{good-for-games} automata~\cite{HP06}. 
They have the desired property that the non-deterministic
choices of the automaton can always be resolved based only on the
letters in the word at the positions up to the current one, thus
making it possible to continue constructing an accepting run for all
words accepted by the non-deterministic automaton that have the word
read so far as a prefix.

We consider a weaker property than the one mentioned above, one that 
is sufficient for strategy transfer in the context of the separation
approach.  
We say that a non-deterministic automaton~$\A$ is a
\emph{good-for-separation} strong $(n, d)$-separator if for every
parity game~$G$ with at most~$n$ vertices and~$d$ priorities, in the 
chained-product game $G \rhd \A$, there is a way to resolve the
non-deterministic choices of~$\A$ based only on the prefix of the play
so far, in such a way that a construction of an accepting run can be
continued for all words in $\evencycles_{n, d}$ 
(but not necessarily for all the words accepted by~$\A$). 

More formally, a non-deterministic automaton~$\A$ is a
\emph{good-for-separation} strong $(n, d)$-separator if it rejects
every word in~$\limsupodd_d$, and for every even game~$G$ with~$n$
vertices and priorities up to~$d$, there is a winning strategy for
Even in the chained-product game~$G \rhd \A$. 
We note that the automata that can be derived using Lehtinen's
techniques~\cite{Leh18} are good-for-separation and hence they are
suitable for being used in the separation approach.  

Note that our quasi-polynomial lower bound in Section~\ref{sec:proof}
holds for all non-deterministic strong $(n, d)$-separators, regardless
of their suitability for strategy transfer and for solving parity
games using the separation approach.

\subsection{Separating automata from register games.}
\label{section:separator-from-register-games}

First, we recall the definition of a (non-deterministic) 
\emph{parity automaton}. 
Like safety automata, parity automata process infinite words, but
instead of having designated rejecting states, every transition has a
\emph{priority}, which is a positive integer.  
The set of transitions consists of tuples of the form $(s, a, p, s')$:
such a transition has priority~$p$, and when reading a letter~$a$ in
state~$s$, the automaton moves to state~$s'$. 
We say that a run of a parity automaton on an infinite word is
\emph{accepting} if the highest transition priority that occurs
infinitely often is even; otherwise it is \emph{rejecting}.  
The automaton \emph{recognizes} the language of all words on which it
has an accepting run. 
The following observation is straightforward.
\begin{proposition}
  For every positive integer~$d$, there is a deterministic parity
  automaton $\Pp_d$, with $1$ state and~$d$ priorities, that
  \emph{recognizes} the language~$\limsupeven_d$. 
\end{proposition}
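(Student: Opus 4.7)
The plan is to construct the automaton $\Pp_d$ explicitly and then observe that its acceptance condition coincides, by construction, with the definition of $\limsupeven_d$. The construction is essentially the ``trivial'' one: a single state together with a self-loop for every letter, where the priority of the transition copies the letter being read.

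More concretely, I would proceed as follows. First I would let the state set of $\Pp_d$ be $\{s\}$, declare $s$ to be the initial state, and for each letter $a \in \Sigma_d = \{1, 2, \dots, d\}$ include the unique transition $(s, a, a, s)$. Since for every pair $(s, a)$ there is exactly one outgoing transition, the automaton is deterministic. Next I would observe that the unique run of $\Pp_d$ on an infinite word $w = a_1 a_2 a_3 \dots \in (\Sigma_d)^\omega$ uses the sequence of transitions $(s, a_i, a_i, s)$ for $i = 1, 2, 3, \dots$, so the multiset of transition priorities occurring in the run coincides with the multiset of letters occurring in $w$. In particular, a priority $p$ appears infinitely often in the run if and only if the letter $p$ appears infinitely often in $w$, and hence the largest priority appearing infinitely often in the run equals the largest letter appearing infinitely often in $w$.

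Finally I would conclude by comparing this with the definition of $\limsupeven_d$ given in Section~\ref{sec:langs}: a word $w$ belongs to $\limsupeven_d$ exactly when the largest letter that appears infinitely often in~$w$ is even. Combined with the previous observation, this says that the unique run of $\Pp_d$ on $w$ is accepting (the largest priority infinitely often is even) if and only if $w \in \limsupeven_d$. Therefore $\Pp_d$ recognizes~$\limsupeven_d$, as required.

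There is no real obstacle here: the proposition is essentially a restatement of the semantics of parity automata once one encodes letters as transition priorities. The only thing worth being careful about is that the automaton has been set up so that every input word admits a (unique) infinite run, which is automatic because the single state has an outgoing transition for every letter of $\Sigma_d$.
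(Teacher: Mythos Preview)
Your proposal is correct and matches the paper's own argument exactly: the paper defines $\Pp_d$ by equipping the single state $s$ with the transitions $(s, p, p, s)$ for all $p \in \Sigma_d$, and leaves the verification implicit. Your write-up simply spells out the straightforward verification that the run's priority sequence coincides with the input word, which is precisely the intended reasoning.
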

Indeed, it suffices to equip the transition relation~$\Pp_d$ with 
transitions $(s, p, p, s)$, where $s$ is the unique state of~$\Pp_d$,
for all $p \in \Sigma_d$. 
Observe that \emph{recognizing} the language~$\limsupeven_d$ implies
being a strong $(n, d)$-separator for \emph{all} positive
integers~$n$:  
a much stronger property than being a strong $(n, d)$-separator for
\emph{some} positive integer~$n$. 
Since the automaton~$\Pp_d$ is deterministic, it is in principle
suitable for the separation approach applied to games with~$d$
distinct priorities. 
Note, however, that using it brings no tangible benefit for solving
parity games:   
the chained-product game $G \rhd \Pp_d$ has as few vertices as
the original game~$G$, but its number of distinct priorities is no
smaller.  

We argue that the key technical result in the recent work of
Lehtinen~\cite{Leh18} can be interpreted as proving the following
theorem. 
\begin{theorem}[\cite{Leh18}]
\label{theorem:lehtinen}
  For all positive integers $n$ and~$d$, there is 
  a good-for-separation parity automaton $\Rr_{n, d}$, with at most 
  ${d+\floor{\lg n}+1 \choose d} = n^{\lg d + O(1)}$ states and  
  $2\floor{\lg n}+3$ priorities, that 
  is a strong $(n, d)$-separator. 
\end{theorem}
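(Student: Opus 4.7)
The plan is to instantiate $\Rr_{n,d}$ as the \emph{register automaton} implicit in Lehtinen's parity-to-logarithmic-priority reduction \cite{Leh18}, and then verify three required properties: the state and priority counts, rejection of every word in $\limsupodd_d$, and good-for-separation on every even graph of size at most $n$.

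Set $k = \floor{\lg n}$. A state of $\Rr_{n,d}$ is a weakly increasing tuple $\seq{r_0, r_1, \ldots, r_k}$ with each $r_i \in \{0, 1, \ldots, d\}$, intuitively recording the maximum priority seen since register $i$ was last reset. Stars-and-bars counts exactly $\binom{d+k+1}{d}$ such tuples, which an elementary estimate shows to be $n^{\lg d + O(1)}$. On reading a letter $p \in \Sigma_d$, every $r_i$ is first bumped to $\max(r_i, p)$; the automaton then non-deterministically selects a \emph{reset level} $i \in \{-1, 0, \ldots, k\}$. For $i \geq 0$ all registers of index $\leq i$ are zeroed and a transition priority $2i+1$ or $2i+2$ is emitted, depending on the parity of the pre-reset value $r_i$; $i = -1$ means ``no reset'' and emits a fixed default odd priority. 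Monotonicity of the tuple is preserved because resets act on a prefix, and the output priorities inhabit a set of size $2k+3 = 2\floor{\lg n}+3$.

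For rejection, let $w \in \limsupodd_d$ and let $p$ be the largest priority occurring infinitely often in $w$, which is odd by assumption. After the final occurrence of any priority exceeding $p$, no register ever exceeds $p$, and each $p$-letter raises every not-yet-reset register to $p$. A short case analysis shows that Even cannot avoid emitting an odd output priority infinitely often: either she eventually stops resetting some register and the default odd priority occurs cofinally, or she continues resetting but is then forced to do so with $r_i = p$ infinitely often. In either case the $\limsup$ of the emitted priorities is odd, so every run on $w$ is rejecting. This step is routine priority-bookkeeping.

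The main difficulty, and the content of Lehtinen's reduction reinterpreted in our vocabulary, is good-for-separation. Fix an even graph $G$ of size $\leq n$ with priorities $\leq d$. By Theorem \ref{thm:pm-witness-ws} and Proposition \ref{prop:pm-into-ut} there is a progress measure $\mu$ mapping $V$ into the leaves of an $(n, d/2)$-universal tree; since the tree has at most $n$ leaves, each can be assigned a distinct binary address of length $k+1$. Even plays the chained-product game $G \rhd \Rr_{n,d}$ using any positional strategy compatible with $\mu$, and on an edge $(v, u)$ of priority $p$ she resets at the most significant bit position where the binary addresses of $\mu(v)$ and $\mu(u)$ disagree, combined with $p$ through the progress condition to determine the parity of the emitted priority. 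A chained-product invariant argument patterned on the standard proof that progress measures witness winning strategies then shows that along every play in the strategy subgraph the $\limsup$ of the emitted output priorities is even, so Even wins. The crux, and the main obstacle, is verifying that $\floor{\lg n}+1$ register levels really do suffice to simulate a $d/2$-level progress measure on a graph of size $n$: this is exactly where Lehtinen's logarithmic compression does its essential work.
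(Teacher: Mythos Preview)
Your automaton differs from Lehtinen's in a way that matters. In her construction (and the paper's), a $k$-reset \emph{shifts}: it discards $r_k$, promotes $r_{k-1},\dots,r_1$ upward, and inserts a fresh register at the bottom. You instead zero a prefix $r_0,\dots,r_i$, discarding all the information those registers held. The shift is not cosmetic; Lehtinen's inductive argument that $1+\lfloor\lg n\rfloor$ registers suffice relies on it, and it is not clear your simpler data structure supports good-for-separation with only logarithmically many registers. Relatedly, your emission scheme assigns the \emph{even} output $2i+2$ as the larger of the two level-$i$ priorities; in Lehtinen's automaton the odd output $2k+1$ dominates $2k$ at each level, which is exactly what forces the rejection argument through. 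With your assignment, Even can interleave level-$i$ resets so that infinitely many hit an even $r_i$, making the $\limsup$ equal to $2i+2$ and the run accepting. Your case split ``stops resetting some register $\Rightarrow$ default occurs cofinally'' is also incorrect: ceasing to reset at levels $\geq j$ still permits infinitely many resets at lower levels.

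The central gap, however, is the good-for-separation argument. You propose that Even reset at the most significant bit where the binary addresses of $\mu(v)$ and $\mu(u)$ differ. But the binary index of a leaf in a linear enumeration carries no information about the $p$-truncation structure that the progress condition speaks of, and a two-vertex example already breaks the scheme: take $v_1\to v_2$ with priority $3$ and $v_2\to v_1$ with priority $4$, with $\mu(v_1)>\mu(v_2)$; your rule resets at the same level on every edge, the register alternates between $3$ and $4$, and the emitted sequence has odd $\limsup$ even though the graph is even. The paper does not attempt a progress-measure-based proof here at all: it gives Lehtinen's automaton verbatim and defers the hard direction to her register-index theorem (\cite[Theorem 4.7]{Leh18}), a non-trivial induction on the number of registers that you neither reproduce nor replace. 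That theorem, not a bit-address trick, is where the logarithmic compression is established.
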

Note that this theorem does indeed bring tangible benefits for solving
parity games (via the separation approach) because if a parity
game~$G$ has $n$ vertices and~$d$ priorities, then the chained-product
game $G \rhd \Rr_{n, d}$ has only a quasipolynomial number of
vertices and a logarithmic number of priorities. 
Even if an unsophisticated algorithm---with run-time that is
exponential in the number of priorities---is used to solve the
chained-product game $G \rhd \Rr_{n, d}$, the overall run-time
is quasi-polynomial.   

For every parity game~$G$, Lehtinen defines the corresponding
\emph{register game}~$R_G$, whose vertices consist of vertices of
game~$G$ together with $\floor{1+\lg n}$-sequences 
$\seq{r_{\floor{1+\lg n}}, \dots, r_2, r_1}$ of the so-called
\emph{registers} that hold priorities, i.e., numbers from the  
set~$\set{1, 2, \dots, d}$. 
The game is played on a copy of~$G$ in the usual way, additionally at
her every move player Even is given a chance---but not an
obligation---to ``reset'' one of the registers, and each register
always holds the biggest priority that has occurred since it was last
reset.  

What needs explaining is what ``resetting a register'' entails.  
When the register at position~$k$
is reset, then the next register sequence is
$\seq{r_{\floor{1+\lg n}}, \dots, r_{k+1}, r_{k-1}, \dots, r_1, 1}$, that
is registers at positions~$1$ to~$k-1$ are promoted to positions~$2$
to~$k$, respectively, and the just-reset register is now at
position~$1$ and it has value~$1$. 
Moreover---and very importantly for Lehtinen's
construction---resetting the register at position~$k$ causes the even 
priority~$2k$ to occur in game~$R_G$ if the value in the register was
even, and the odd priority~$2k+1$ otherwise. 
If, instead, Even decides not to reset any register then the odd
priority~$1$ occurs. 

Lehtinen's main technical result is that the original parity game~$G$ and
the register game~$R_G$ have the same winners. 
She proves it by arguing that if Even has a (positional) winning
strategy in~$G$ then she has a strategy of resetting registers
in~$R_G$ so that she again wins the parity condition 
(albeit with the number of priorities reduced from an arbitrarily
large~$d$ in~$G$ to only~$\floor{1+\lg n}$ in~$R_G$). 
Our approach is to separate the graph structure from Lehtinen's
mechanism to capture the original parity winning condition using
registers.  
We now define an automata-theoretic analogue of her construction in
which we use non-determinism to model the ability to pick various
resetting strategies.

For all positive numbers~$n$ and~$d$, such that~$d$ is even, we define
the \emph{non-deterministic parity automaton}~$\Rr_{n, d}$ in the
following way. 
\begin{compactitem}
\item
  The set of states of~$\Rr_{n, d}$ is the set of 
  non-increasing $\floor{1+\lg n}$-sequences 
  $\seq{r_{\floor{1+\lg n}}, \dots, r_2, r_1}$ of ``registers'' that hold
  numbers in $\set{1, 2, \dots, d}$.  
  The initial state is $\seq{1, 1, \dots, 1}$. 

\item
  For every state 
   $s = \seq{r_{\floor{1+\lg n}}, \dots, r_2, r_1}$ 
  and
  letter $p \in \Sigma_d$, we define the 
  \emph{update of~$s$ by~$p$} to be the state
  $\seq{r_{\floor{1+\lg n}}, \dots, r_k, p, \dots, p}$, where $k$ is the
  smallest such that $r_k > p$.   

\item
  For every state 
   $s = \seq{r_{\floor{1+\lg n}}, \dots, r_2, r_1}$ 
  and for every~$k$, $1 \leq k \leq \floor{1+\lg n}$, we define the  
  \emph{$k$-reset of~$s$} to be the state 
  $\seq{r_{\floor{1+\lg n}}, \dots, r_{k+1}, r_{k-1}, \dots, r_2, 1}$. 

\item
  For every state~$s$
  and letter $p \in \Sigma_d$, there is a transition  
  $(s, p, 1, s')$ of priority~$1$ in the transition relation, 
  and where $s'$ is the update of~$s$ by~$p$;
  we call this a \emph{non-reset} transition. 

\item
  For every state~$s$, 
  letter $p \in \Sigma_d$, 
  and for every~$k$, $1 \leq k \leq \floor{1+\lg n}$, 
  if $s'' = \seq{r_{\floor{1+\lg n}}, \dots, r_2, r_1}$ is the update
  of~$s$ by~$p$ and $r_k$ is even, 
  then there is a transition $(s, p, 2k, s')$ of priority~$2k$ in the 
  transition relation, where $s'$ is the $k$-reset of~$s''$. 
  We say that this transition is an \emph{even reset of register~$k$}. 
\item
  For every state~$s$,
  letter $p \in \Sigma_d$, 
  and for every~$k$, $1 \leq k \leq \floor{1+\lg n}$, 
  if $s'' = \seq{r_{\floor{1+\lg n}}, \dots, r_2, r_1}$ is the update
  of~$s$ by~$p$ and $r_k$ is odd, 
  then there is a transition $(s, p, 2k+1, s')$ of priority $2k+1$ in
  the transition relation, where $s'$ is the $k$-reset of~$s''$. 
  We say that this transition is an \emph{odd reset of register~$k$}. 
\end{compactitem}
The states of~$\Rr_{n, d}$ are monotone sequences of length $1+\lg n$,
consisting of positive integers no larger than~$d$, and hence their
number is bounded by 
${d+\floor{\lg n}+1 \choose d} \leq d^{\floor{1+\lg n}}$, which is 
$n^{\lg d + O(1)}$.  
We note that the register game $R_G$ that Lehtinen constructs from a
parity game~$G$ is essentially the same as the chained-product
game $G \rhd \Rr_{n, d}$.  

The proof that the parity automaton~$\Rr_{n, d}$ is a strong 
$(n, d)$-separator can be obtained by adapting Lehtinen's proofs
showing that the \emph{register index} of a parity game with~$n$
vertices is at most~$1+\lg n$~\cite[Theorem~4.7]{Leh18}
and that a winning strategy for Odd in a parity game~$G$ yields a
winning strategy for him in the corresponding register  
game~$R_G$~\cite[Lemma~3.3]{Leh18}.    
We refer an interested reader to Lehtinen's paper for the details.
We note, however, that the strategy for resetting registers in the
register game~$R_G$ (and hence also in the chained-product
game~$G \rhd \Rr_{n, d}$)
that Lehtinen constructs in the proof of~\cite[Theorem~4.7]{Leh18}
depends only on the current content of the registers and the current
vertex in the game. 
This property implies that the automata~$\Rr_{n, d}$ are
good-for-separation $(n, d)$-separators in the sense discussed in 
Section~\ref{section:non-deterministic-separators}.

We stress that the quasi-polynomial lower bound for the size of strong 
separators that we establish in Section~\ref{sec:proof} applies to
\emph{safety} automata, but not to \emph{parity} automata such  
as~$\Rr_{n, d}$. 

On the other hand, we argue that the parity automata~$\Rr_{n, d}$ can
be turned into good-for-separation \emph{safety} $(n, d)$-separators
by taking a chained product with other deterministic safety
separators.  
Let~$\Rr$ be a parity automaton over the alphabet~$\Sigma_d$ and
with~$d'$ transition priorities, and let~$\S$ be a safety automaton
over the alphabet~$\Sigma_{d'}$.  
Consider the following chained-product automaton $\Rr \rhd \S$.  
\begin{compactitem}
\item
  The set of states is the set of pairs $(r, s)$, where~$r$ is a state
  of~$\Rr$ and $s$ is a state of~$\S$. 
  The initial state is the pair of the initial states of the two
  automata. 
  A state $(r, s)$ is rejecting if~$s$ is rejecting in~$\S$. 

\item
  If $(r, a, p, r')$ is a transition in~$\Rr$, where 
  $a \in \Sigma_{d}$ is the letter read by the transition and 
  $p \in \Sigma_{d'}$ is the priority of the transition, and if 
  $(s, p, s')$ is a transition in~$\S$  
  (for deterministic automata, we often write 
  $\delta(s, p) = s'$ instead), 
  then $\big((r, s), a, (r', s')\big)$ is a transition 
  in~$\Rr \rhd \S$. 
\end{compactitem}
Let $n' = {d+\floor{\lg n}+1 \choose d} = n^{\lg d + O(1)}$ be the
number of states in the parity register automaton~$\Rr_{n, d}$ and
let $d' = 2\floor{\lg n}+4$ be an upper bound on its priorities.  

\begin{proposition}
\label{proposition:parity-to-safety-separators}
  If $\S$ is a deterministic safety strong $(n', d')$-separator then
  the chained-product automaton $\Rr_{n, d} \rhd \S$ is a safety 
  good-for-separation strong $(n, d)$-separator.  
\end{proposition}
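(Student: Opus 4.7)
The claim asserts both a semantic rejection property and a good\-/for\-/separation property, so the proof naturally falls into two parts matching the definition in Section~\ref{section:non-deterministic-separators}.

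For rejection, I would chain the rejection properties of the two factors. Fix $w \in \limsupodd_d$ and take any run of the chained product $\Rr_{n, d} \rhd \S$ on $w$; such a run is a choice of a run $\rho$ of the non\-/deterministic parity automaton $\Rr_{n, d}$ on $w$, together with the unique run of the deterministic safety automaton $\S$ on the sequence $w' \in (\Sigma_{d'})^\omega$ of $\Rr_{n, d}$\-/transition priorities produced by $\rho$. Since $\Rr_{n, d}$ is a strong $(n, d)$\-/separator by Theorem~\ref{theorem:lehtinen}, every such $\rho$ is parity\-/rejecting, so $w' \in \limsupodd_{d'}$. Since $\S$ is a strong $(n', d')$\-/separator, its unique run on $w'$ reaches its rejecting state. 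Hence every run of $\Rr_{n, d} \rhd \S$ on $w$ visits a rejecting state, and $w$ is rejected.

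For good\-/for\-/separation, let $G$ be an even game with at most $n$ vertices and priorities up to $d$. The remark preceding the proposition tells us that Lehtinen supplies a positional\-/in\-/$(v, r)$ winning strategy $\tau$ for Even in the parity game $G \rhd \Rr_{n, d}$. My proposal is that Even plays $\tau$ verbatim in the safety game $G \rhd (\Rr_{n, d} \rhd \S)$: she replicates her $G$\-/moves and her $\Rr_{n, d}$\-/non\-/deterministic choices from $\tau$, while the deterministic $\S$ updates on the induced priority sequence. Every play consistent with this composite strategy produces a $\Rr_{n, d}$\-/priority sequence $w' \in (\Sigma_{d'})^\omega$ that, because $\tau$ wins the parity condition, lies in $\limsupeven_{d'}$.

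What remains is the heart of the argument: to show that $\S$ accepts each such $w'$. Since $\S$ is a strong $(n', d')$\-/separator, it suffices to prove $w' \in \evencycles_{n', d'}$. The natural candidate for a witnessing even graph is the subgraph $H$ of $\Rr_{n, d}$ spanned by the $\Rr_{n, d}$\-/transitions actually used along plays consistent with $\tau$; by construction $H$ has at most $n'$ vertices and edge priorities bounded by $d'$, and the register trajectory along any such play is an infinite path in $H$. The main obstacle is to prove that $H$ is even, i.e.\ every cycle in $H$ has even maximum edge priority. Cycles in $H$ do not automatically lift to cycles in the strategy subgraph of $\tau$ inside $G \rhd \Rr_{n, d}$ because the $G$\-/components may differ around a register cycle, and in particular the priority\-/$1$ non\-/reset self\-/loops of $\Rr_{n, d}$ look troublesome at first glance. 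The plan is to exploit the specific structure of Lehtinen's resetting discipline: register values encode the maximum $G$\-/priority seen since the last reset at each level, so whenever the register trajectory returns to the same state after some finite segment, Even's reset choices are engineered so that the largest reset priority fired along that segment is even; otherwise one could iterate the segment to produce an infinite play contradicting $\tau$'s winning the parity condition in $G \rhd \Rr_{n, d}$. Formalising this invariant and deducing evenness of $H$ from it is the principal technical step, and is where any proof must engage directly with the combinatorics of Lehtinen's construction~\cite{Leh18}.
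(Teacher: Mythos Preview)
Your rejection argument is correct and essentially matches the paper's.

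For good\-/for\-/separation, you make life harder than necessary by projecting down to a subgraph $H$ of $\Rr_{n,d}$. The paper instead takes the witnessing graph to be the full strategy subgraph $G'$ of the positional strategy $\tau$ inside the product game $G \rhd \Rr_{n,d}$ itself. Because $\tau$ is a positional winning strategy in a \emph{parity} game, every cycle of $G'$ is even by the standard argument (an odd cycle could be repeated forever to beat~$\tau$); no Lehtinen\-/specific register combinatorics is needed. Every play consistent with $\tau$ is then literally an infinite path in the even graph $G'$, so its $\Rr_{n,d}$\-/priority projection lies in $\evencycles_{|G'|,d'}$ and is accepted by~$\S$. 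The obstacle you correctly identify---that a cycle in the projected graph $H$ need not lift to a cycle in $G'$---simply does not arise, because the paper never projects: it keeps the $G$\-/component in the vertices of the witness graph.

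Your approach is not wrong in principle, but the ``principal technical step'' you outline (showing that the register projection $H$ is even by analysing Lehtinen's resetting discipline) is left as a plan rather than a proof, and it is precisely the step the paper's choice of witness graph makes unnecessary. The trade\-/off is that $G'$ has up to $n \cdot n'$ vertices rather than $n'$; the paper is slightly loose here, writing ``$n'$ vertices'' for $G'$, but since $n \cdot n' = n^{\lg d + O(1)}$ anyway this does not affect the quasi\-/polynomial conclusions.
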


\begin{proof}
  Let $G$ be an even graph with~$n$ vertices and priorities up
  to~$d$. 
  Careful inspection of Lehtinen's proof of~\cite[Theorem~4.7]{Leh18}
  reveals that Even has a \emph{positional} winning strategy~$\sigma$
  in the chained-product game~$G \rhd \Rr_{n, d}$. 
  Note that the strategy subgraph~$G'$ of that strategy is even, and
  that it has at most~$n'$ vertices and priorities up to~$d'$. 
  Since~$\S$ is a deterministic $(n', d')$-separator, on every path
  in~$G'$ the unique run of~$\S$ on the sequence of priorities that
  occur on the path is accepting. 
  It follows that in the game $(G \rhd \Rr_{n, d}) \rhd \S$, using
  strategy $\sigma$ on the $G \rhd \Rr_{n, d}$ component guarantees
  that Even always wins, and hence Even has a winning strategy in the
  equivalent game $G \rhd (\Rr_{n, d} \rhd \S)$. 

  On the other hand, every word in~$\limsupodd_d$ is rejected
  by~$\Rr_{n, d}$ because it is a strong $(n, d)$-separator.  
  It follows that in every run of~$\Rr_{n, d}$ on such a word, the
  highest transition priority that occurs infinitely often is odd. 
  In other words, in every run of the chained-product automaton  
  $\Rr_{n, d} \rhd \S$ on such a word, the word over the
  alphabet~$\Sigma_{d'}$ that is fed into the automaton~$\S$ is
  in~$\limsupodd_{d'}$, and hence it is rejected because~$\S$ is a
  strong $(n', d')$-separator. 
\end{proof}

Consider taking~$\S$ to be the deterministic safety 
automaton~$\U_{n', d'}$ from Theorem~\ref{thm:sep-from-ut}.
By Proposition~\ref{proposition:parity-to-safety-separators}, the
chained-product automaton $\Rr_{n, d} \rhd \U_{n', d'}$ is a safety
good-for-separation $(n, d)$-separator. 
Moreover, if $\U_{n', d'}$ is based on the nearly optimal
quasi-polynomial universal trees from
Theorem~\ref{thm:small-universal-tree} then the size of 
$\Rr_{n, d} \rhd \U_{n', d'}$ is $n^{\lg n (\lg d + O(1))}$.

\section{Universal trees inside separating automata}
\label{sec:proof}

The main result established in this section, and a main technical
contribution of the paper, is the following quasi-polynomial lower 
bound on the size of all safety strong separators. 

\begin{theorem}
\label{thm:lower-bound}
  Every non-deterministic safety strong $(n, d)$-separator 
  has at least 
  $\binom{\floor{\lg n} + d/2 - 1}{\floor{\lg n}}$ states, 
  which is at least $n^{\lg(d/{\lg n}) - 2}$.
\end{theorem}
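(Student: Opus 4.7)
The plan is to extract an $(n, d/2)$-universal ordered tree $T_\A$ out of the state space of any non-deterministic safety strong $(n, d)$-separator $\A$, and then invoke Theorem~\ref{thm:lower_bound_universal_tree}. Since $\binom{\floor{\lg n} + d/2 - 1}{\floor{\lg n}} = \binom{\floor{\lg n} + d/2 - 1}{d/2 - 1}$, the first bound on the number of states will follow directly, and the second estimate is a consequence of the same theorem applied with $\ell = n$ and $h = d/2$.

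First, for each odd priority $p \in \{1, 3, \ldots, d-1\}$ and each reachable non-rejecting state~$s$ of~$\A$, I define $\rho_p(s)$ to be the largest integer~$k$ for which there exists a finite word over $\{1, \ldots, p\}$ with exactly $k$ occurrences of~$p$ that admits a run of~$\A$ from~$s$ avoiding the rejecting state. A K\"onig's-lemma argument using the strong separator property rules out $\rho_p(s) = \infty$: otherwise, concatenated with the prefix reaching~$s$, an infinite accepting trace would yield an accepted word whose $\limsup$ is the odd priority~$p$, violating strong separation. A short prefix-extension argument also shows that every non-rejecting transition $s \trans{q} s'$ of~$\A$ satisfies $\rho_{p'}(s) \geq \rho_{p'}(s')$ at every odd $p' \geq q$, with strict inequality when $p' = q$. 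Mapping each such state~$s$ to the sequence $(\rho_{d-1}(s), \rho_{d-3}(s), \ldots, \rho_1(s))$ and taking its prefix-closure under lexicographic order yields an ordered tree~$T_\A$ of height $d/2$ with at most as many leaves as~$\A$ has states. The induced linear quasi-orders $\preceq_p$, comparing prefixes down to coordinate~$p$, refine each other as required.

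To prove $T_\A$ is $(n, d/2)$-universal, take any ordered tree~$T$ of height at most $d/2$ with at most $n$ leaves and build a directed multigraph $G_T$ whose vertices are the leaves of~$T$: for each ordered pair of distinct leaves $(\ell, \ell')$ whose lowest common ancestor in~$T$ is at depth~$h$, include the edge $\ell \to \ell'$ of priority $d - 2h$ when $\ell < \ell'$ in the tree lex-order, and priority $d - 2h - 1$ otherwise. A short combinatorial check shows $G_T$ is even: on any cycle the shallowest LCA depth~$h^*$ determines the maximum priority, and since the cycle must close, at least one lex-increasing edge at LCA depth~$h^*$ must occur, contributing the even priority $d - 2h^*$. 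Since $G_T$ is an even graph with at most $n$ vertices and priorities at most~$d$, every infinite play in $G_T$ encodes a word in $\evencycles_{n, d}$ that must be accepted by~$\A$. Fixing an accepting run of~$\A$ on a play visiting every leaf infinitely often, and passing to the set of (state, leaf) pairs visited infinitely often, I select for each leaf~$\ell$ a state~$s_\ell$ of~$\A$ visited infinitely often at~$\ell$. For any two leaves $\ell < \ell'$ with LCA at depth~$h$, pigeonholing on repeated sub-patterns in the accepting run around the two-cycle $\ell \to \ell' \to \ell$ yields closed $\A$-walks of the shape $s \trans{d-2h} s' \trans{d-2h-1} s$ between infinitely-often states at~$\ell$ and~$\ell'$. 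Applying the monotonicity of the previous paragraph around such walks forces the $\rho$-tuples of the two states to coincide at odd priorities corresponding to depths $\leq h$ in~$T$ and to satisfy $\rho_{d-2h-1}(s) < \rho_{d-2h-1}(s')$ at the split level. Carefully aggregating these pairwise constraints across all pairs of leaves produces the required order-preserving embedding $T \hookrightarrow T_\A$.

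The technical heart, and the main anticipated obstacle, is this final embedding step. The priority encoding on $G_T$ must be delicately balanced so that $G_T$ remains even (to invoke the strong separator property) while the resulting cycle constraints are simultaneously tight enough to pin down both the common-ancestor structure and the lexicographic order between the selected states. Dealing with the non-determinism of~$\A$ via pigeonholing on an accepting run, and verifying that the collection of pairwise two-cycle constraints coheres into a single global embedding of~$T$ into $T_\A$ rather than merely pairwise consistency between separately chosen states, is where the proof's main technical work lies.
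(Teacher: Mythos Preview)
Your overall strategy---extract an $(n,d/2)$-universal tree from the state space of~$\A$ and invoke Theorem~\ref{thm:lower_bound_universal_tree}---is exactly the paper's, and your construction of~$G_T$ together with the argument that it is even coincide with the paper's~$G_t$. Your ranks~$\rho_p$ are a direct variant of the resistance-based quasi-orders the paper builds inductively in Lemma~\ref{lem:tree-like}; the monotonicity you state for transitions $s\trans{q}s'$ with $q\le p'$ is correct and corresponds to Conditions~1 and~2 of a tree-decomposition.

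The genuine gap is precisely where you flag it: the embedding step. Your plan is to fix a single accepting run on ``a play visiting every leaf infinitely often'', select one state~$s_\ell$ per leaf, and then, for each pair $\ell<\ell'$, pigeonhole on occurrences of the two-cycle $\ell\to\ell'\to\ell$ in that run to obtain a closed $\A$-walk $s\trans{d-2h}s'\trans{d-2h-1}s$. Two problems arise. First, a play that merely visits every leaf infinitely often need not contain \emph{any} occurrence of a given two-cycle: the cyclic play $\ell_1\to\ell_2\to\cdots\to\ell_n\to\ell_1\to\cdots$, for instance, never traverses $\ell_1\to\ell_3\to\ell_1$. Second, even if you engineer a play containing many copies of every two-cycle, the states $s,s'$ that pigeonhole produces depend on the pair $(\ell,\ell')$; nothing forces the state found at~$\ell$ when processing the pair $(\ell,\ell')$ to have the same $\rho$-tuple as the one found at~$\ell$ when processing $(\ell,\ell'')$. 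Pairwise order constraints between floating representatives do not assemble into a single injective, order- and LCA-depth-preserving map $\ell\mapsto s_\ell$.

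The paper resolves this not via two-cycles but by constructing one highly structured word that mirrors the shape of~$T$: for a node~$x$ at height~$k$ with children $x_1<\cdots<x_m$, it sets $\alpha_x=\beta_{x_m}\cdot(2k{-}1)\cdot\beta_{x_{m-1}}\cdots(2k{-}1)\cdot\beta_{x_1}$ and $\beta_x=(\alpha_x\cdot 2k)^{|\A|}$. The $|\A|$-fold repetition in~$\beta_x$ is the crux: along any accepting run the states are $\preceq_{2k+1}$-non-increasing while~$\beta_x$ is read, so among the~$|\A|$ copies of~$\alpha_x$ at least one is read entirely inside a single $\preceq_{2k+1}$-class. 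This lets the embedding be built by induction on~$k$, with the shared class forcing the images of the~$x_i$ (obtained from the inductive hypothesis applied to each~$\beta_{x_i}$) to be siblings under a common parent in~$\tree_D(\A)$. The nested repetitions, calibrated to~$|\A|$, are exactly what make the locally obtained images cohere globally---the ingredient your aggregation step is missing.
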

Since the currently known quasipolynomial time algorithms for  
solving parity games implicitly or explicitly construct a separating
automaton~\cite{CJKLS17,JL17,FJSSW17,Leh18}, the size of which
dictates the complexity, this result explains in a unified way the
quasipolynomial barrier. 
Moreover, our proof of the theorem pinpoints universal trees as the
underlying combinatorial structure behind all the recent algorithms,
by establishing that there is an $(n, d/2)$-universal tree hiding in
the set of states of every safety strong $(n, d)$-separator.

\begin{lemma}
\label{lem:tree-embedding}
  If $\A$ is a (non-deterministic) safety strong $(n, d)$-separator
  then there is an injective mapping from the leaves of some
  $(n, d/2)$-universal tree into the states of~$\A$. 
\end{lemma}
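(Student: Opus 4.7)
The plan is to extract from the separator $\A$ a chain of linear quasi-orders on its reachable states, one per odd priority $p$ with $1 \le p \le d-1$, and to interpret this chain as an ordered tree whose leaves correspond to states. I then argue that this tree must contain an $(n, d/2)$-universal subtree, because the separation property forces $\A$ to accommodate the witnessing progress measures from every even game graph on at most $n$ vertices.

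For each odd priority $p$ I would define a relation $\preceq_p$ on reachable states, comparing them by ``tolerance'' to finite words of priorities not exceeding $p$: informally, $q \preceq_p q'$ would assert that whenever there is a non-rejecting run from $q$ on some $u \in \{1,\ldots,p\}^*$, there is a non-rejecting run from $q'$ on $u$ whose endpoint dominates the endpoint from $q$ under the finer orders $\preceq_{p-2}, \ldots, \preceq_1$ already constructed. Built inductively from $p=1$ upwards, the relations automatically form a refinement chain $\preceq_1 \subseteq \preceq_3 \subseteq \cdots \subseteq \preceq_{d-1}$. The decisive property to establish is linearity: if $q$ and $q'$ were $\preceq_p$-incomparable via words $u, u'$, one could glue $u$ and $u'$ together with a pumped suffix of priority $p$ into either a word in $\evencycles_{n,d}$ on which every run of $\A$ eventually rejects, or a word in $\limsupodd_d$ on which $\A$ has a non-rejecting run; either outcome contradicts $\A$ being a strong $(n, d)$-separator.

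Given the chain of linear quasi-orders, the $\preceq_{d-1}$-classes become the (ordered) children of the root of a tree $T_\A$; each such class is refined by $\preceq_{d-3}$ into its children, and so on down to depth $d/2$, where the leaves are the $\preceq_1$-classes, each nonempty. Picking one state per leaf yields an injective map from the leaves of $T_\A$ into states of $\A$. To see that $T_\A$ contains an $(n, d/2)$-universal subtree, I would take an arbitrary ordered tree $T$ of height $d/2$ with at most $n$ leaves and build an even parity game graph $G_T$ on at most $n$ vertices whose cycle structure encodes $T$ (each leaf-to-root path corresponding to a branch, with priorities chosen so that ``going up in $T$'' corresponds to seeing a larger odd priority in $G_T$). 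Since every infinite play in $G_T$ encodes a word in $\evencycles_{n, d}$, the automaton $\A$ must admit a non-rejecting run; the states traversed by such runs, combined with the just-established linearity, give an order-preserving embedding of $T$ into $T_\A$.

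The main obstacle will be proving linearity of the $\preceq_p$ and verifying that the construction of $G_T$ indeed yields an embedding rather than a mere sequence of comparabilities. Non-determinism of $\A$ compounds both difficulties, because the definition of $\preceq_p$ must be robust under all resolutions of non-deterministic choices, and the tree embedding must be assembled from a coherent family of runs on the words arising from $G_T$; one expects here to use the separation hypothesis a second time, to rule out ``bad'' choices of runs by turning them into accepted words in $\limsupodd_d$.
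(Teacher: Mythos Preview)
Your two-step plan---extract a chain of linear quasi-orders on the non-rejecting states, interpret it as an ordered tree, then show that tree is $(n,d/2)$-universal by encoding arbitrary small trees as even graphs---is exactly the paper's route. Where you diverge is in how the quasi-orders are built, and there the proposal has genuine gaps.

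Your $\preceq_p$ is a simulation preorder over words in $\{1,\dots,p\}^*$. As stated, larger $p$ means more test words plus an additional endpoint condition, so the relation \emph{shrinks} as $p$ grows: you obtain $\preceq_{d-1}\subseteq\cdots\subseteq\preceq_1$, the reverse of the chain you claim and of what your tree description needs. More seriously, simulation preorders on non-deterministic automata are rarely total, and your linearity sketch does not close: from witnesses $u,u'$ of mutual non-domination you propose to ``glue and pump priority $p$,'' but $u$ and $u'$ are read from possibly different states reached by different prefixes, and there is no evident single infinite word on which every run of $\A$ rejects (to contradict $\evencycles$-acceptance) or some run accepts (to contradict $\limsupodd$-rejection). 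The paper sidesteps both issues by defining the orders \emph{top-down} via a numerical \emph{resistance}: inside each $\preceq_{2i+3}$-class it restricts $\A$ to that class and to priorities at most $2i{+}1$, shows this restriction has no odd cycle (otherwise iterating the cycle after a non-rejecting prefix gives an accepted $\limsupodd$ word---this is precisely where strong separation enters), and orders states by the maximal number of $(2i{+}1)$-labelled edges on outgoing paths. Linearity is then automatic (you are comparing integers), refinement holds by construction, and the required transition-monotonicity conditions fall out directly. Your second step also omits the device that handles non-determinism: the paper does not merely feed $\A$ some infinite play of $G_T$, but a word in which each relevant sub-pattern is repeated $|\A|$ times, so that by pigeonhole one copy is read entirely within a single $\preceq_{2k+1}$-class, which is what enables the inductive embedding of the corresponding subtree.
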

Note that Theorem~\ref{thm:lower-bound} follows from
Lemma~\ref{lem:tree-embedding} by applying
Therem~\ref{thm:lower_bound_universal_tree}---our quasi-polynomial
lower bound for universal trees, because $d \leq n$.

We prove \cref{lem:tree-embedding} in two steps. 
In the first step, we show that 
every safety strong $(n, d)$-separator 
has a \emph{tree-like} structure
(\cref{lem:tree-like}). 
Then, assuming this special structure, we prove that there is a
universal tree whose leaves can be injectively mapped into the set of
states of~$\A$ (\cref{lem:universal-tree}).  

\subsection{Tree-like structure.}

A binary relation $\preceq$ on a set $X$ is called a 
\emph{linear quasi-order} if it is reflexive, transitive, and total
(i.e. such that for all $x,y\in X$ either $x\preceq y$, or $y\preceq
x$, or both). 
If $x\preceq y$ and $y\not\preceq x$, then we write $x\prec y$. 
An equivalence class of $\preceq$ is a maximal set $e\subseteq X$ such
that $x\preceq y$ and $y\preceq x$ for all $x,y\in e$. It is
well-known that the equivalence classes of $\preceq$ form a partition
of $X$ and given two equivalence classes $e$ and $e'$, there exist $x
\in e$ and $y \in e'$ such that $x \preceq y$ if and only if for all
$x \in e$ and $y \in e'$, $x \preceq y$. 
When this is the case, it is denoted by $e \preceq e'$, and $e \prec
e'$ when additionally $e' \not\preceq e$.  

Given two linear quasi-orders $\preceq_1$ and $\preceq_2$, we write
${\preceq_1} \subseteq {\preceq_2}$ if for all $x,y \in X$, 
$x \preceq_1 y$ implies $x \preceq_2 y$. 
In that case, any equivalence class of $\preceq_2$ is formed with a
partition of equivalence classes of $\preceq_1$.  
In other words, an equivalence class of $\preceq_1$ is 
included in a unique equivalence class of $\preceq_2$ and disjoint
from the other ones. 

For an automaton $\A$ over the alphabet $\Sigma_d$, a
\emph{tree-decomposition} of $\A$ is a sequence of linear quasi-orders  
${\preceq_1} \subseteq {\preceq_3} \subseteq \dots \subseteq \,
{\preceq_{d+1}}$ 
on the set of non-rejecting states of $\A$ such that: 
\begin{compactenum}[1.]
\item
  if $(s, p, s')$ is a transition in~$\A$ then $s \succeq_{2i+1} s'$
  for all~$i$ such that $p < 2i+1 \leq d+1$. 

\item
  if $(s, p, s')$ is a transition in~$\A$ and $p$ is odd then 
  $s \succ_{2i+1} s'$ 
  for all~$i$ such that $1 \leq 2i+1 \leq p$. 

\item 
  ${\preceq_{d+1}}$ has a single equivalence class, containing all 
  non-rejecting states of~$\A$. 
\end{compactenum}
In other words, reading a priority cannot cause an increase with
respect to orders with indices greater than it, and additionally 
reading an odd priority necessarily causes a decrease with 
respect to orders whose indices are smaller than or equal to this
priority.  
If there is a tree-decomposition of $\A$, we say that $\A$ is
\emph{tree-like}.   
Given a tree decomposition $D$ of $\A$, we define the \emph{$D$-tree}
of $\A$, denoted $\tree_D(\A)$, as follows 
(recall the notation for ordered trees from
Section~\ref{sec:pm-and-ut}):  
\begin{compactitem}
\item 
  nodes of the $\tree_D(\A)$ are sequences
  $\seq{e_{d-1},e_{d-3},\dots,e_p}$, where 
  $p$ is odd, $1 \leq p \leq d-1$, 
  and where every branching direction $e_i$ is an equivalence class of
  the quasi-order $\preceq_i$, such that $e_{d-1} \supseteq e_{d-3}
  \supseteq \dotsm \supseteq e_p$, 
\item 
  the order between branching directions $e_i,e'_i$ being equivalence
  classes of $\preceq_i$ is $e_i<e'_i$ when $e_i\prec_i e_i'$. 
\end{compactitem}
Notice that for a non-rejecting state $q$ of $\A$, there is a unique
sequence $e_{d-1} \supseteq e_{d-3} \supseteq \dotsm \supseteq e_1$
where for every~$i$, $e_i$ is an equivalence class of $\preceq_i$
containing~$q$. 
One can thus assign a non-rejecting state $q$ to the corresponding
leaf $\seq{e_{d-1},e_{d-3},\dots,e_1}$ in such a way that for every
odd priority $p \in \{1,3,\ldots,d-1\}$, $q \prec_p q'$ if and only if
the $p$-truncation of the leaf assigned to $q$ is smaller in the 
lexicographic order than the $p$-truncation of the leaf assigned
to~$q'$.  

\medskip

An automaton is \emph{accessible} if for every state $q$ there
exists a run from an initial state to $q$, and moreover, if $q$ is
non-rejecting, there exists such a run which does not go through any
rejecting state. 
The first of the two steps of the proof of \cref{lem:tree-embedding}
can be summarized by the following lemma. 

\begin{lemma}\label{lem:tree-like}
  Every accessible non-deterministic safety strong $(n, d)$-separator 
  is tree-like. 
\end{lemma}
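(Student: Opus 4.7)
The plan is to show that the subgraph $\A^\circ$ of $\A$ on its non-rejecting states (after discarding transitions that enter rejecting states) is \emph{even}---every cycle in it has even maximum priority---and then to use a progress measure on $\A^\circ$ to build the required linear quasi-orders via lexicographic comparison of truncated labels.

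First I would establish evenness of $\A^\circ$. Suppose, for contradiction, that $C$ is a cycle in $\A^\circ$ whose maximum priority $p$ is odd. Accessibility provides a finite run from some initial state to some vertex of $C$ that avoids rejecting states. Concatenating this access run with infinitely many iterations of $C$ yields an accepting run of $\A$ on an infinite word whose priority sequence has limsup equal to $p$; this word therefore lies in $\limsupodd_d$, contradicting the assumption that $\A$ is a strong $(n,d)$-separator.

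Because $\A^\circ$ is even, I view it as a parity game in which Even owns no vertex and Odd owns everything; Even wins vacuously, so Theorem~\ref{thm:pm-witness-ws} yields a progress measure $\mu$ mapping each non-rejecting state to a leaf of an ordered tree of height $d/2$, and since Even's strategy subgraph is all of $\A^\circ$ the progress condition holds on every transition of $\A^\circ$. For each odd $q\in\{1,3,\ldots,d-1\}$, define $s\preceq_q s'$ if and only if $\mu(s)|_q \leq \mu(s')|_q$ lexicographically, and let $\preceq_{d+1}$ collapse all non-rejecting states into a single equivalence class. Each $\preceq_q$ is total because the lexicographic order on sequences over linearly ordered branching directions is total; and $\preceq_q\subseteq\preceq_{q'}$ for odd $q<q'$ holds because $\mu(\cdot)|_{q'}$ is a prefix of $\mu(\cdot)|_q$ and any lex inequality on a sequence restricts to a (weak) lex inequality on any prefix.

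It remains to verify the two progression conditions. For Condition~1, consider a transition $(s,p,s')$ with $p<2i+1$: the progress condition guarantees $\mu(s)|_p\geq\mu(s')|_p$ (strictly if $p$ is odd, weakly if $p$ is even), and since $p<2i+1$ makes $\mu(\cdot)|_{2i+1}$ a prefix of $\mu(\cdot)|_p$, the inequality restricts to $\mu(s)|_{2i+1}\geq\mu(s')|_{2i+1}$, i.e.\ $s\succeq_{2i+1}s'$. For Condition~2, consider a transition $(s,p,s')$ with $p$ odd and $2i+1\leq p$: the progress condition gives the strict inequality $\mu(s)|_p>\mu(s')|_p$, and since $2i+1\leq p$ makes $\mu(\cdot)|_p$ a prefix of $\mu(\cdot)|_{2i+1}$, the first position at which $\mu(s)|_p$ and $\mu(s')|_p$ differ is already a differing position inside $\mu(s)|_{2i+1}$ and $\mu(s')|_{2i+1}$, witnessing $\mu(s)|_{2i+1}>\mu(s')|_{2i+1}$. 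Condition~3 is immediate. The main obstacle is the first step: exploiting the rejection of $\limsupodd_d$ together with accessibility to force evenness of $\A^\circ$, which is precisely the structural property that unlocks the existence of progress measures; everything afterwards is a routine translation of the progress condition into the language of lex-truncated quasi-orders.
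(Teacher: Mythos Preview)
Your argument is correct, and it takes a genuinely different route from the paper. The paper builds the quasi-orders $\preceq_{d+1},\preceq_{d-1},\ldots,\preceq_1$ by hand, top-down: inside each equivalence class~$e$ of $\preceq_{2i+3}$ it considers the sub-automaton $\A_{e,2i+1}$ on priorities $\le 2i+1$, argues (using accessibility and rejection of $\limsupodd_d$) that this sub-automaton has no odd cycle, and then orders states by their \emph{resistance}---the maximal number of $(2i{+}1)$-labelled edges on any outgoing path. You instead observe the global fact that the whole non-rejecting sub-automaton $\A^\circ$ is an even graph, invoke Theorem~\ref{thm:pm-witness-ws} once to get a progress measure~$\mu$ on all of $\A^\circ$, and then read off every $\preceq_{2i+1}$ simultaneously as the pullback of the lexicographic order on $(2i{+}1)$-truncations of~$\mu$. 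Your checks of Conditions~1 and~2 via the ``prefix transports weak lex-inequalities, and strict lex-inequality on a prefix forces strict lex-inequality on any extension'' principle are sound.

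The trade-offs: your proof is shorter and conceptually pleasing---it makes explicit that a tree-decomposition is nothing more than a progress measure on~$\A^\circ$, which dovetails nicely with the paper's overarching message that universal trees and progress measures underlie separators. The paper's direct construction, by contrast, is self-contained (it does not appeal to the non-trivial direction of Theorem~\ref{thm:pm-witness-ws}, whose proof is outsourced to~\cite{EJ91,Jur00}) and avoids a couple of minor technicalities you gloss over: $\A^\circ$ may contain sinks, so strictly speaking one should either note that progress measures exist on even graphs regardless of dead-ends, or harmlessly add even self-loops; and ``Even wins vacuously'' is not quite the right phrase---Even wins because every infinite path in a finite even graph has even $\limsup$, which is the substantive fact you are using.
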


\begin{figure*}
\centering
\begin{subfigure}{0.25\textwidth}
\begin{tikzpicture}[level/.style={sibling distance = 2cm/#1,
  level distance = 1cm}]
\tikzstyle{one}=[scale=0.5, shape=circle, fill=black]
\node[one] (a) {}
child { node[one] {} 
    child { node[one] {} 
    	child { node[one] {} 
    		child [level distance=1.5em] {node {$\ell_1$} edge from parent[draw=none] } }
    	child { node[one] {} 
    		child [level distance=1.5em] {node {$\ell_2$} edge from parent[draw=none] } }
    	child { node[one] {} 
			child [level distance=1.5em] {node {$\ell_3$} edge from parent[draw=none] }
} }
    child { node[one] {} 
    	child { node[one] {} 
			child [level distance=1.5em] {node {$\ell_4$} edge from parent[draw=none] }} } }
child { node[one] {} 
	child { node[one] {} 
		child { node[one] {} 
			child [level distance=1.5em] {node {$\ell_5$} edge from parent[draw=none] } }
		child { node[one] {} 
			child [level distance=1.5em] {node {$\ell_6$} edge from parent[draw=none] }} }
	child { node[one] {} 
		child { node[one] {} 
			child [level distance=1.5em] {node {$\ell_7$} edge from parent[draw=none] } }
		child { node[one] {} 
			child [level distance=1.5em] {node {$\ell_8$} edge from parent[draw=none] } } } };
\end{tikzpicture}
\end{subfigure}
\hfill
\begin{subfigure}{0.71\textwidth}
\begin{tikzpicture}[scale=1]
\tikzset{
	every state/.style={draw=blue!50,very thick,fill=blue!50,scale=0.8},
	fleche/.style={->, >=latex}
}

\node[state] (n1) at (-6.5,0) {$v_1$};
\node[state] (n2) at (-5,0) {$v_2$};
\node[state] (n3) at (-3.5,0) {$v_3$};
\node[state] (n4) at (-2,0) {$v_4$};
\node[state] (n5) at (0,0) {$v_5$};
\node[state] (n6) at (1.5,0) {$v_6$};
\node[state] (n7) at (3,0) {$v_7$};
\node[state] (n8) at (4.5,0) {$v_8$};

\begin{pgfonlayer}{background}
\node[fill=blue!20, fit=(n1)(n2)(n3), rounded corners=4mm, minimum height=8em] (n123) {};
\node[fill=blue!20, fit=(n4), rounded corners=4mm, minimum height=8em] (n4') {};
\node[fill=blue!20, fit=(n5)(n6), rounded corners=4mm, minimum height=8em] (n56) {};
\node[fill=blue!20, fit=(n7)(n8), rounded corners=4mm, minimum height=8em] (n78) {};
\end{pgfonlayer}

\node [fit=(n123)(n4'), rounded corners=4mm, draw=purple, thick, minimum height=9em] (n1234) {};
\node [fit=(n56)(n78), rounded corners=4mm, draw=purple, thick, minimum height=9em] (n5678) {};

\path[fleche]     (n8) edge [bend right] node [above] {\small $1$} (n7);
\path[fleche]     (n7) edge [bend right] node [below] {\small $2$} (n8);

\path[fleche]     (n6) edge [bend right] node [above] {\small $1$} (n5);
\path[fleche]     (n5) edge [bend right] node [below] {\small $2$} (n6);

\path[fleche]     (n3) edge [bend right] node [above] {\small $1$} (n2);
\path[fleche]     (n2) edge [bend right] node [below] {\small $2$} (n3);

\path[fleche]     (n2) edge [bend right] node [above] {\small $1$} (n1);
\path[fleche]     (n1) edge [bend right] node [below] {\small $2$} (n2);

\path[fleche]     (n3.north) edge [bend right] node [above] {\small $1$} (n1.north);
\path[fleche]     (n1.south) edge [bend right] node [below] {\small $2$} (n3.south);

\path[fleche]     (n78.160) edge [bend right] node [above] {\small $3$} (n56.20);
\path[fleche]     (n56.340) edge [bend right] node [below] {\small $4$} (n78.200);

\path[fleche]     (n4'.135) edge [bend right] node [above] {\small $3$} (n123.12);
\path[fleche]     (n123.348) edge [bend right] node [below] {\small $4$} (n4'.225);

\path[fleche]     (n5678.170) edge [bend right] node [above] {\small $5$} (n1234.10);
\path[fleche]     (n1234.350) edge [bend right] node [below] {\small $6$} (n5678.190);

\end{tikzpicture}
\end{subfigure}
\caption{An ordered tree~$t$ and the corresponding even graph~$G_t$.}
\label{figure:tree-and-graph}
\end{figure*}

\begin{proof}
We define the linear quasi-orders inductively starting from higher
indices, that is, in the order 
${\preceq_{d+1}}, {\preceq_{d-1}}, \dots, {\preceq_1}$. 
As ${\preceq_{d+1}}$ we take the quasi-order in which all
non-rejecting states are in a single equivalence class. 
Condition 3.\ is already satisfied.

Assume now that the quasi-orders 
${\preceq_{d+1}}, {\preceq_{d-1}}, \dots, {\preceq_{2i+3}}$ are
already defined, and that they fulfil Conditions~1.\ and~2. 
We define ${\preceq_{2i+1}}$ as a refinement of ${\preceq_{2i+3}}$.    
If $q \prec_{2i+3} r$ then we set $q \prec_{2i+1} r$ as well. 
So it remains to define whether ${\preceq_{2i+1}}$ holds for
states~$q$ and~$r$ in the same equivalence class of
${\preceq_{2i+3}}$.    
Before this, notice that we are already guaranteed that
${\preceq_{2i+1}}$ satisfies Conditions 1.\ and 2.\ for priorities
higher than~$2i+1$.  
Indeed, Condition 1.\ talks only about priorities smaller than~$2i+1$. 
By Condition~2.\ applied to ${\preceq_{2i+3}}$ we know that if
priority~$p$ is odd, $p \geq 2i+3$ (i.e., $p > 2i+1$),
and $(s, p, s')$ is a transition, then $s \succ_{2i+3} s'$, so also   
$s \succ_{2i+1} s'$. 
Therefore we only need to define ${\preceq_{2i+1}}$ in such a way that 
Condition~1.\ is satisfied for priorities  
from the set $\{1, \dots, 2i\}$ and Condition~2.\ is satisfied for 
the priority~$2i+1$. 
Intuitively speaking, now we only have to care about priorities that
are at most~$2i+1$.  

Let $e$ be an arbitrary equivalence class of ${\preceq_{2i+3}}$. 
Consider an automaton $\A_{e,2i+1}$, which contains a part of $\A$
consisting of only the states that belong to the class $e$, 
and only the transitions that have endpoints inside $e$ and are
labelled by priorities at most $2i+1$ 
(notice that $\A_{e, 2i+1}$ may not be complete). 
Observe now that there cannot be any odd cycle in $\A_{e, 2i+1}$.
Indeed, otherwise we could consider the infinite run that first
reaches this cycle from an initial state in~$\A$ without visiting
any rejecting state 
(which is possible under the assumption that $\A$ is accessible), 
and then goes around this cycle forever 
(note that none of the states in the cycle is rejecting);   
the word read by this run would be in $\limsupodd_d$, but it would
be accepted by the automaton (and it should not). 
Notice that this is exactly the point in the proof of
\cref{lem:tree-like} where we need the assumption about rejecting all
the words from $\limsupodd_d$. 
Rejecting all the words from $\oddcycles_{n, d}$ would not be
sufficient, as the word described above may 
not arise from an odd graph of size~$n$, and
thus does not have to belong to $\oddcycles_{n, d}$. 
Therefore, on every path in $\A_{e, 2i+1}$, at most $|\A_{e,2i+1}|-1$ 
edges are labelled by letters with priority $2i+1$. 
Let the \emph{resistance} of a state in $\A_{e, 2i+1}$ be the maximal
number of edges labelled by priority $2i+1$ over all
paths starting in that state. 
By the previous observation, the resistance of a state is always
finite.  
Having defined the resistance, for two states~$s$ and~$s'$ in
$\A_{e, 2i+1}$, we say that $s \preceq_{2i+1} s'$ if the resistance of 
$s$ is not greater than the resistance of~$s'$. 

We have to show that such a definition of ${\preceq_{2i+1}}$ indeed 
fulfils Conditions 1.\ and 2. 
For Condition 1.\, we have to check that letters with priority smaller
than $2i+1$ never cause an increase in the quasi-order
${\preceq_{2i+1}}$. 
Consider priority~$p$, such that $p < 2i+1$ and $(s, p, s')$ is a 
transition. 
We know by the induction assumption that $s \succeq_{2i+3} s'$. 
If $s \succ_{2i+3} s'$ then as well $s \succ_{2i+1} s'$, and we are
done with Condition~1. 
Otherwise, $s$ and $s'$ are in the same equivalence class of
${\preceq_{2i+3}}$, and it is enough to show that $s$ has resistance
not greater than that of~$s'$. 
However, if $s'$ has resistance $k$ and $(s, p, s')$ is a transition,
then $s$ also has resistance at least~$k$.
Thus, state $s$ cannot have smaller resistance than $s'$.  
This implies that indeed $s \succeq_{2i+1} s'$, and 
Condition~1.\ is fulfilled.   
For Condition~2., we need to show that reading priority $2i+1$ 
causes a decrease with respect to ${\preceq_{2i+1}}$. 
Assume that there is a transition $(s, 2i+1, s')$. 
By the inductive hypothesis for ${\prec_{2i+3}}$,
Condition~1.\ implies that $s \succeq_{2i+3} s'$. 
If $s \succ_{2i+3} s'$ then we are done as before, so assume that~$s$
and~$s'$ are in the same equivalence class of~${\preceq_{2i+3}}$. 
In order to show that $s \succ_{2i+1} s'$ it is enough to observe that
$s$ has greater resistance than~$s'$, which establishes Condition~2. 
\end{proof}

\subsection{Universal tree.}

The main goal of this section is to complete the proof of
\cref{lem:tree-embedding} by showing that the tree-like structure of
every safety strong $(n, d)$-separator uncovered in the previous
section has hidden inside it an $(n, d/2)$-universal tree, whose
leaves can be injectively mapped into states of the separator.  

Let $\A$ be a safety strong $(n, d)$-separator. 
First note that we can assume that~$\A$ is accessible, by removing
the states that are not reachable from initial states
and by making rejecting the non-rejecting states which are only
reachable by visiting a rejecting state.
By \cref{lem:tree-like} we know that there is a tree-decomposition~$D$ 
of~$\A$. 
We are going to prove that $\tree_D(\A)$ is a $(n,d/2)$-universal tree 
and that there is an injective mapping from its leaves into the states
of~$\A$. 

By definition, there is a bijection between the leaves of
$\tree_D(\A)$ and the equivalence classes of ${\preceq_1}$, mapping a 
leaf $\seq{e_{d-1},e_{d-3},\dots,e_1}$ to $e_1$. 
For every equivalence class $e$ of ${\preceq_1}$, pick a state $q_e$
in~$e$.  
Consider the function mapping a leaf $\seq{e_{d-1},e_{d-3},\dots,e_1}$  
of $\tree_D(\A)$ to the state $q_{e_1}$. 
This function is injective from the leaves of $\tree_D(\A)$ into the
states of the automaton.  
Thus, in order to complete the proof of \cref{lem:tree-embedding}, it
suffices to prove the following lemma.

\begin{lemma}
\label{lem:universal-tree}
  If $\A$ is a safety strong $(n, d)$-separator then for every
  tree-decomposition~$D$ of~$\A$, the $D$-tree of~$\A$ is 
  $(n, d/2)$-universal. 
\end{lemma}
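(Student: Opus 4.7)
The plan is to associate to an arbitrary ordered tree $t$ of height $h = d/2$ with $m \leq n$ leaves an even test graph $G_t$, exploit that $\A$ must accept every infinite path through $G_t$ to extract a coherent tuple of witness states in $\A$, and then read off the embedding of $t$ into $\tree_D(\A)$ directly from the tree-decomposition axioms.

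First I would construct $G_t$ as follows. Let $v_1 < \cdots < v_m$ be the leaves of $t$ in tree order. Take $G_t$ to be the complete directed graph on $\{v_1, \dots, v_m\}$ with edge priorities encoding the LCA structure: for $i \neq j$ with lowest common ancestor at depth $k$ in $t$, the edge $v_i \to v_j$ carries priority $d - 2k$ (even) if $i < j$ (``forward''), and priority $d - 2k - 1$ (odd) if $i > j$ (``backward''). A direct case analysis shows $G_t$ is even: if some cycle had maximum priority odd, say $d - 2k - 1$ on a backward edge crossing LCA-level $k$, then closing the cycle would require a forward crossing of that same level, carrying priority at least $d - 2k$, contradicting maximality. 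Thus $G_t$ has at most $n$ vertices, priorities in $\{1, \dots, d\}$, and every infinite path through $G_t$ encodes a word in $\evencycles_{n, d}$ which must be accepted by $\A$.

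Next I would extract witness states $q_1, \dots, q_m$ of $\A$ together with transitions $(q_i, p_{ij}, q_j) \in \A$ for every $G_t$-edge $v_i \to v_j$ of priority $p_{ij}$. The clean setup is the safety game on $V(G_t) \times Q(\A)$ in which the adversary chooses outgoing $G_t$-edges from the current vertex and $\A$ reacts by selecting matching non-rejecting transitions, with $\A$'s safety goal being to avoid rejecting states forever. Let $W$ denote $\A$'s winning region. Strong separation of $\A$ forces $W$ to project onto every vertex of $G_t$: an uncovered vertex $v$ would let the adversary's memoryless winning strategy produce an infinite $G_t$-path on which $\A$ has no non-rejecting run from any starting state, contradicting acceptance of a word of $\evencycles_{n, d}$. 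Fixing a memoryless winning strategy $\delta^*$ for $\A$ on $W$ converts the lifting of any $G_t$-walk to $W$ into a deterministic dynamical system on a finite set. Iteratively lifting a suitable traversal of $G_t$ (for instance nested Eulerian tours, refined by repeated pigeonhole across visits to each vertex) yields a single coherent tuple $(q_1, \dots, q_m)$ of states in $W$ such that every $G_t$-edge $v_i \to v_j$ occurs as a transition $(q_i, p_{ij}, q_j)$ of $\A$.

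Given the witnesses, the tree-decomposition axioms deliver the embedding. For any pair of leaves $v_i, v_j$ with LCA at depth $k$ in $t$, the two transitions $q_i \leftrightarrow q_j$ form a 2-cycle in $\A$ whose priorities $d - 2k$ and $d - 2k - 1$ are both strictly less than $d - 2k + 1$; hence by condition 1 of the tree-decomposition, $\preceq_{d-2k+1}$ is non-increasing around the cycle and therefore constant, placing $q_i$ and $q_j$ in the same $\preceq_{d-2k+1}$-equivalence class. If moreover $v_i < v_j$ in tree order, the backward transition $q_j \to q_i$ reads the odd priority $d - 2k - 1$, which by condition 2 forces $q_i \prec_{d-2k-1} q_j$. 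Since $\preceq_1 \subseteq \preceq_3 \subseteq \cdots \subseteq \preceq_{d+1}$, equivalence in a finer order refines equivalence in any coarser one; thus the map sending each internal node $u$ of $t$ at depth $k$ to the tuple $(e_{d-1}, e_{d-3}, \dots, e_{d-2k+1})$, where $e_{d-2j+1}$ is the shared $\preceq_{d-2j+1}$-class of $\{q_v : v$ is a leaf in the subtree of $u\}$, is well-defined and respects depth; the strict $\prec_{d-2k+1}$ separation (applied at the common parent at depth $k - 1$) sends distinct sibling subtrees at depth $k$ to distinct and correctly ordered children in $\tree_D(\A)$; and distinct leaves fall in distinct $\preceq_1$-classes since their LCA has depth at most $h - 1$. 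This is the desired embedding of $t$ into $\tree_D(\A)$.

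The hardest step is the synchronized extraction in the second paragraph: closure of $W$ gives only path-wise or pairwise witnesses, but the embedding requires a single coherent assignment of one state per vertex of $G_t$ that supports every outgoing $G_t$-edge simultaneously as an $\A$-transition. The safety-game framing (with memoryless winning strategies for $\A$ turning lifting into a deterministic process) is the cleanest route I see, reducing the problem to a pigeonhole / Ramsey argument on periodic orbits of the lifted walk and thereby bypassing the need to directly pump the non-deterministic $\A$.
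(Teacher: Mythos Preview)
Your construction of $G_t$ and the final embedding argument (third paragraph) are both correct and essentially match the paper's. The gap is entirely in the ``synchronized extraction'' step, and it is a real one.

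First, your argument that the safety player's winning region $W$ covers every vertex of $G_t$ is not valid. You write that an uncovered vertex $v$ would let the adversary's memoryless winning strategy ``produce an infinite $G_t$-path on which $\A$ has no non-rejecting run from any starting state.'' But the adversary's memoryless strategy in the product game is a function of the pair $(v,q)$, not of $v$ alone, so against different $\A$-states it may steer into entirely different $G_t$-paths; there is no single word in $\evencycles_{n,d}$ that you can point to as being rejected. This is exactly the good-for-games obstruction that the paper flags in Section~\ref{section:non-deterministic-separators}: the lemma is stated for \emph{arbitrary} non-deterministic strong separators, not only for good-for-separation ones, and a strong separator may accept every word in $\evencycles_{n,d}$ while still losing the online product game against an adaptive adversary. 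Second, even granting a memoryless strategy $\delta^*$, your pigeonhole sketch does not deliver a coherent tuple. A periodic orbit of the lifted walk may visit a single $G_t$-vertex in many different $\A$-states, and there is no mechanism in your outline that forces a selection of one state per vertex making \emph{all} $O(m^2)$ edges of $G_t$ simultaneous $\A$-transitions; the requirement is a labelled-graph homomorphism from $G_t$ into $\A$, which is strictly stronger than what any single run or orbit provides.

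The paper sidesteps both problems by never playing a game and never asking for a global coherent tuple. It builds one specific word $\alpha_t \in \evencycles_{n,d}$ with nested $|\A|$-fold repetitions (so that at every level of $t$ a block of the form $(\alpha_x \cdot 2k)^{|\A|}$ appears), takes a \emph{single} accepting run on $\alpha_t$---which exists purely by the language inclusion $\evencycles_{n,d}\subseteq L(\A)$---and then argues inductively on the depth of $t$: among the $|\A|$ repetitions of $\alpha_x$, by pigeonhole on the at most $|\A|-1$ strict drops in $\preceq_{2k+1}$, one copy is read while staying inside a single $\preceq_{2k+1}$-class, and the inductive hypothesis applied to that copy yields the embedding of the subtree. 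The level-by-level pigeonhole replaces your global ``coherent tuple,'' and using one fixed accepted word replaces your safety-game framing, so the argument goes through for non-deterministic $\A$ without any good-for-games assumption.
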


\begin{proof}
  It is enough to show that every tree~$t$ of height at most $d/2$ and 
  with at most $n$ leaves can be isomorphically embedded
  into~$\tree_D(\A)$.   
  Without loss of generality, we assume that $t$ has exactly $n$
  leaves, and that all the leaves have depth~$d/2$. 
  Otherwise, if the number of leaves is less than~$n$, add some
  branches to the tree so as to have exactly $n$ leaves; and if a leaf
  is not at depth~$d/2$, add a path from the leaf so as to reach
  depth $d/2$. 
  If such a tree can be isomorphically embedded in $\tree_D(\A)$ then
  so can the original one.  

  As a running example, consider the ordered tree $t$ with 
  $n=8$ and $d=6$ in Figure~\ref{figure:tree-and-graph}. 

  The proof proceeds by the following three steps. 
  \begin{compactenum}
  \item 
    The construction of an even graph $G_t$ with $n$ vertices and~$d$
    priorities, whose structure reflects the structure of the ordered
    tree~$t$ and, in particular, whose vertices correspond to leaves
    in~$t$. 

  \item 
    The construction of an infinite word $\alpha_t$ that is the priority  
    projection of an infinite path in~$G_t$ 
    (and that hence is in~$\evencycles_{n, d}$), 
    whose sufficiently long finite prefix is highly repetitive. 

  \item 
    A proof---using the highly repetitive nature of word
    $\alpha_t$---that $t$ can be isomorophically embedded
    into~$\tree_D(\A)$.   
\end{compactenum}

\subsubsection*{Construction of $G_t$.}
  As the set of vertices of $G_t$ we take $\{v_1, v_2, \dots, v_n\}$. 
  There is a vertex~$v_i$ for each leaf~$\ell_i$ in tree~$t$, for
  every  $i=1, 2, \dots, n$, where $\ell_1, \ell_2, \dots, \ell_n$ are
  all the leaves of tree~$t$ listed in the (increasing) lexicographic
  order. 
  Consider~$i$ and~$j$, such that $1 \leq i < j \leq n$, and let~$p$
  be the smallest priority in $\eset{1, 2, \dots, d}$, such that 
  $\ell_i|_p = \ell_j|_p$. 
  Note that then $\ell_i|_p = \ell_j|_p$ is the maximum-depth common
  ancestor of leaves~$\ell_i$ and~$\ell_j$, and~$p$ is even. 
  In graph~$G_t$, there is then an edge from~$v_i$ to~$v_j$ with 
  the even priority~$p$, and there is an edge from~$v_j$ to~$v_i$ with 
  the odd priority~$p-1$. 
  Moreover, for every vertex~$v_i$ and every even priority~$p$, 
  $1 < p \leq d$, there is a self-loop from~$v_i$ to itself in~$G_t$ 
  with the even priority~$p$. 
  Note that every cycle in graph~$G_t$ is even. 

The right-hand-side of Figure~\ref{figure:tree-and-graph} illustrates
the graph~$G_t$ for our running example tree~$t$ illustrated in the
left-hand-side of the figure. 
In order to avoid clutter, a single edge drawn in
Figure~\ref{figure:tree-and-graph} from one set of vertices
(enclosed in a rounded rectangle) to another denotes the set of edges
from every vertex in the former set to every vertex in the latter; 
moreover, the self-loops are not shown.

\begin{figure*}
\begin{center}
\begin{tikzpicture}[level/.style={sibling distance = 7cm/#1,
  level distance = 1.5cm}]
\tikzstyle{one}=[scale=0.8]
\node[one] (a) {$\alpha_x = 
  \big((1 \cdot 2)^{|\A|} \cdot 3 \cdot (1 \cdot 2)^{|\A|} \cdot 4\big)^{|\A|} \cdot 5 \cdot 
  \big((2)^{|\A|} \cdot 3 \cdot (1 \cdot 1 \cdot 2)^{|\A|} \cdot 4\big)^{|\A|}$} 
child { node[one, text width=5cm] {$\beta_x = \big((2)^{|\A|} \cdot 3
    \cdot (1 \cdot 1 \cdot 2)^{|\A|} \cdot 4\big)^{|\A|}$ \\ $\alpha_x = (2)^{|\A|} \cdot 3 \cdot (1 \cdot 1 \cdot 2)^{|\A|}$} 
    child { node[one, text width=3cm] {$\beta_x = (1 \cdot 1 \cdot
        2)^{|\A|}$ \\ $\alpha_x = 1 \cdot 1$} 
    	child { node[one] {} 
    		child [level distance=1em] {node {$\ell_1$} edge from parent[draw=none] } } 
    	child { node[one] {} 
    		child [level distance=1em] {node {$\ell_2$} edge from parent[draw=none] }}
    	child { node[one] {} 
    		child [level distance=1em] {node {$\ell_3$} edge from parent[draw=none] }} }
    child { node[one, text width=2cm] {$\beta_x = (2)^{|\A|}$ \\ $\alpha_x = \eps$} 
    	child { node[one] {} 
    		child [level distance=1em] {node {$\ell_4$} edge from parent[draw=none] }} } }
child { node[one, text width=5cm] {$\beta_x = \big((1 \cdot 2)^{|\A|}
    \cdot 3 \cdot (1 \cdot 2)^{|\A|} \cdot 4\big)^{|\A|}$ \\ $\alpha_x = (1 \cdot 2)^{|\A|} \cdot 3 \cdot (1 \cdot 2)^{|\A|}$} 
	child { node[one, text width=2cm] {$\beta_x = (1 \cdot
            2)^{|\A|}$ \\ $\alpha_x = 1$} 
		child { node[one] {} 
			child [level distance=1em] {node {$\ell_5$} edge from parent[draw=none] } }
		child { node[one] {} 
    		child [level distance=1em] {node {$\ell_6$} edge from parent[draw=none] }} }
	child { node[one, text width=2cm] {$\beta_x = (1 \cdot
            2)^{|\A|}$ \\ $\alpha_x = 1$} 
		child { node[one] {} 
    		child [level distance=1em] {node {$\ell_7$} edge from parent[draw=none] }}
		child { node[one] {} 
			child [level distance=1em] {node {$\ell_8$} edge from parent[draw=none] } } } };
\end{tikzpicture}
\end{center}
\caption{Words $\alpha_x$ and $\beta_x$ for every non-leaf node~$x$
  in the ordered tree~$t$ from Figure~\ref{figure:tree-and-graph}.}  
\label{figure:alpha-beta}
\end{figure*}

\subsubsection*{Construction of $\alpha_t$.}
We now describe an infinite path in~$G_t$ whose priority projection
gives an infinite word~$\alpha_t$;
note that the latter is in~$\evencycles_{n, d}$ because~$G_t$ is an
even graph. 
Before we give the rigorous and general construction,  we illustrate
the construction on our running example.  
In the case of graph~$G_t$ in Figure~\ref{figure:tree-and-graph}, the
infinite path is as follows. 
\begin{compactenum}
\item \label{s1} Follow the cycle between vertices $v_8$ and $v_7$ a
  large number of times, alternating between edge priorities~$1$ 
  and~$2$.  
\item \label{s2} Reach vertex $v_6$ from vertex $v_7$ with edge
  priority~$3$. 
\item \label{s3} Follow the cycle between vertices~$v_6$ and~$v_5$ a
  large number of times, alternating between edge priorities~$1$ 
  and~$2$. 
\item \label{s4} Reach vertex~$v_8$ from vertex~$v_5$ with edge
  priority~$4$. 
\item \label{s5} Repeat steps \ref{s1}, \ref{s2}, \ref{s3} and
  \ref{s4} a large number of times. 
\item \label{s6} Reach vertex~$v_4$ from vertex~$v_5$ with edge
  priority~$5$. 
\item \label{s7} Take the self-loop around vertex $v_4$, with edge 
  priority $2$, a large number of times. 
\item \label{s8} Reach vertex~$v_3$ from vertex~$v_4$ with edge
  priority~$3$. 
\item \label{s9} Follow the cycle going through vertices $v_3$, $v_2$,
  and $v_1$ a large number of times, alternating between edge
  priorities~$1$, $1$, and~$2$. 
\item \label{s10} Reach vertex $v_4$ from vertex $v_1$ with edge
  priority~$4$. 
\item \label{s11} Repeat steps~\ref{s7}., \ref{s8}., \ref{s9}., and
  \ref{s10}.\ a large number of times. 
\item After all those steps, in order to make the path and the word
  infinite, once vertex~$v_1$ is reached after step \ref{s9}., take
  the self-loop around $v_1$, with edge priority~$2$, infinitely many
  times.   
\end{compactenum}

Now we give a general and rigorous definition of an infinite
word~$\alpha_t$ for an arbitrary ordered tree~$t$ and the
corresponding even graph~$G_t$.  
For every node $x$ of $t$, we define two finite words~$\alpha_x$
and~$\beta_x$ inductively. 
If $x$ is a leaf then we set $\alpha_x = \beta_x = \eps$.   
Let $x$ be a node of depth~$d/2-k$, $0 < k \leq d/2$, and
let $x_1, x_2, \dots, x_\ell$ be the children of $x$, listed in the
increasing order. 
We define~$\alpha_x$ and~$\beta_x$ in the following way:
\[
  \alpha_x \; = \; \beta_{x_\ell} \cdot (2k-1) \cdot
  \beta_{x_{\ell-1}} \cdot (2k-1) \cdots (2k-1) \cdot \beta_{x_1}\,,
\]
and
\[
  \beta_x \; = \; \big( \alpha_x \cdot (2k) \big)^{|\A|}\,.
\]
where $|\A|$ is the number of states in $\A$. 
Finally, we set $\alpha_t = \alpha_r \cdot (2)^\omega$ where~$r$ is the
root of the tree~$t$. 
Notice that the highest priority appearing in $\alpha_t$ is at
most~$d$ (because the height of~$t$ is~$d/2$), and so 
$\alpha_t \in \Sigma_d^\omega$.

In Figure~\ref{figure:alpha-beta}, we give the pairs 
$(\alpha_x, \beta_x)$ for all non-trivial nodes~$x$ in the tree~$t$
from Figure~\ref{figure:tree-and-graph}.

\begin{proposition}
  There is an infinite path in graph~$G_t$ whose priority projection
  is the infinite word~$\alpha_t$. 
\end{proposition}

\begin{proof}
  We argue---by induction on the structure of the ordered
  tree~$t$---that for every node~$x$ in~$t$, there are two finite
  paths containing only 
  vertices in~$G_t$ that correspond to leaves in the subtree of~$t$
  rooted at~$x$, and whose priority projections are the
  words~$\alpha_x$ and~$\beta_x$, respectively.  
  The base case, when~$x$ is a leaf, is straightforward.
  The inductive case for
  \[
  \alpha_x \; = \; \beta_{x_\ell} \cdot (2k-1) \cdot
    \beta_{x_{\ell-1}} \cdot (2k-1) \cdots (2k-1) \cdot \beta_{x_1}
  \]
  when~$x$ is a node with children~$x_1, x_2, \dots, x_\ell$ follows
  routinely from the inductive hypothesis and the definition of the
  set of edges with priorities in the graph~$G_t$. 
  The claim for~$\beta_x \; = \; \big(\alpha_x \cdot (2k)\big)^{|\A|}$
  can be equally routinely obtained from the result for~$\alpha_x$. 
  Finally, extending the finite path whose priority projection
  is~$\alpha_r$, where $r$ is the root of tree~$t$, to an infinite
  path whose priority projection is~$\alpha_t$ is straightforward,
  because every vertex in~$G_t$ has a self-loop with priority~$2$. 
\end{proof}

\subsubsection*{Embedding $t$ isomorphically in~$\tree_D(\A)$.}  
We prove the following claim for every node~$x$ in~$t$;
let the depth of~$x$ be~$d/2-k$. 

\begin{claim}
  Let $\rho$ be a finite run of~$\A$ not visiting rejecting states
  ($\rho$ needs not start in an initial state) that either: 
  \begin{compactitem}
  \item	reads $\alpha_x$ and is such that all states visited by $\rho$ 
    belong to the same equivalence class of ${\preceq_{2k+1}}$, or  
  \item	reads $\beta_x$.
  \end{compactitem}
  Then there exists a node $x' = \seq{e_{d-1},e_{d-3},\dots,e_{2k+1}}$ 
  of $\tree_D(\A)$  such that 
  \begin{compactenum}[1.]
  \item the classes $e_{d-1},e_{d-3},\dots,e_{2k+1}$ contain some
    state visited by $\rho$, 
  \item the subtree of $t$ rooted at $x$ embeds isomorphically in the
    subtree of $\tree_D(\A)$ rooted at $x'$. 
  \end{compactenum}
\end{claim}

Let us first show how this claim finishes the proof of
\cref{lem:universal-tree}. 
Because $\alpha_t \in \evencycles_{n,d}$, there is
a run of $\A$ that reads $\alpha_t$ and never visits rejecting states; 
let $\rho$ be the prefix of this run that reads $\alpha_r$, where $r$ 
is the root of~$t$.
Recall that the depth of all the leaves in~$t$ is~$d/2$.
All states visited by $\rho$ belong to the same equivalence class of
${\preceq_{d+1}}$, because this quasi-order has only one equivalence 
class. 
Using the claim for the run $\rho$, we obtain that tree~$t$ embeds
isomorphically in $\tree_D(\A)$ 
(note that $x'$ is necessarily the root of $\tree_D(\A)$). 

\medskip

We now prove the claim by induction on $k$.
In the base case, when $x$ is a leaf, 
we have $\alpha_x = \beta_x = \eps$, and as branching directions of
the node $x' = \seq{e_{d-1}, e_{d-3}, \dots, e_1}$ we take the
equivalence classes of 
${\preceq_{d-1}},{\preceq_{d-3}}, \dots, {\preceq_1}$ 
containing the only state of $\rho$; 
such a node indeed fulfils Conditions 1.\ and~2. 

We now focuse on the inductive step.
Suppose first that $\rho$ reads $\alpha_x$, and that all visited
states belong to the same equivalence class of ${\preceq_{2k+1}}$. 
Let $x_1, x_2, \dots, x_\ell$ be the children of~$x$.
Recall that
\[
\alpha_x \; = \; \beta_{x_\ell} \cdot (2k-1) \cdot
  \beta_{x_{\ell-1}} \cdot (2k-1) \cdots (2k-1) \cdot \beta_{x_1}
\]
We can divide $\rho$ into fragments 
$\rho_\ell, \rho_{\ell-1}, \dots, \rho_1$, where $\rho_i$
reads~$\beta_{x_i}$.   
For every $i \in \{1, 2, \dots,\ell\}$, by the inductive hypothesis,
we can find a node $x'_i = \seq{e_{i,d-1}, e_{i,d-3}, \dots, e_{i,2k-1}}$
that fulfils Conditions~1. and~2.\ with respect to $\rho_i$ and~$x_i$.  
By assumption, all states visited by $\rho$ belong to the same
equivalence class of ${\preceq_{2k+1}}$, hence also to the same 
equivalence class of ${\preceq_j}$ for all
$j \in \{2k+1, 2k+3, \dots, d-1\}$. 
On the other hand, by Condition 2., the classes $e_{i,j}$ for 
$i \in \{1, 2, \dots,\ell\}$ and $j \in \{2k+1, 2k+3, \dots, d-1\}$
contain some state visited by~$\rho$. 
It follows that $e_{i,j} = e_{i',j}$ for all 
$i, i' \in \{1, 2, \dots, \ell\}$ and 
$j \in \{2k+1, 2k+3, \dots, d-1\}$, 
which implies that the nodes $x_1', x_2', \dots, x_\ell'$ are
siblings, having a common parent~$x'$.  
It is easy to see that the node~$x'$ satisfies Conditions 1.\ and~2.

Suppose now that $\rho$ reads 
$\beta_x = \big(\alpha_x \cdot (2k)\big)^{|\A|}$.  
By the fact that $D$ is a tree-decomposition of $\A$,  
we know that no transition in $\rho$ goes up with respect to the
quasi-order ${\preceq_{2k+1}}$, because all priorities in~$\beta_x$
are at most~$2k$. 
As $\A$ has at most $|\A|$ states, it means that at most $|\A|-1$
transitions of the considered run cause a decrease with respect
to~${\preceq_{2k+1}}$.  
This implies that there is a part of this run that reads $\alpha_x$
and contains no increase nor decrease with respect to
${\preceq_{2k+1}}$, that is, all states visited by this part of the
run belong to the same equivalence class of ${\preceq_{2k+1}}$.   
We continue with this part of the run as in the previous paragraph,
and in this way we finish the proof of the claim. 
\end{proof}

\section{Open questions and further work}

We leave open the question whether a stronger version of our main
technical result holds, namely whether every safety automaton
separating $\evencycles_n$ from $\oddcycles_n$ has at least
quasi-polynomial number of states.
Our argument cannot be directly extended to that setting, as already
the proof of \cref{lem:tree-like} crucially relies on that fact that
no word from $\limsupodd$ is accepted. 

We also leave it as open questions whether there are parity automata 
as in Theorem~\ref{theorem:lehtinen} but with a sub-quasi-polynomial
number of states and a logarithmic number of priorities, or whether
quasi-polynomial lower bounds can also be established for such parity
automata. 

A very recent technical report by Colcombet and Fijalkow~\cite{CF18}
gives alternative proofs of our Theorem~\ref{thm:sep-from-ut} and 
Lemma~\ref{lem:universal-tree}. 
They do so by introducing a new concept of \emph{universal graphs},
and they argue that a subclass of universal graphs that satisfies an 
extra (``saturation'') property corresponds to universal trees. 
Their main result is that the sizes of the smallest strong separators,
universal trees, and universal graphs are equal to one another, which
is a refinement of the corollary of our Theorem~\ref{thm:sep-from-ut}
and Lemma~\ref{lem:universal-tree} that the former two are.
However, in order to conclude that all those smallest sizes are
quasi-polynomial, they too rely on our
Theorem~\ref{thm:lower_bound_universal_tree}.

\section*{Acknowledgements}

This research has been supported by the EPSRC grant EP/P020992/1
(Solving Parity Games in Theory and Practice).
W.~Czerwi\'nski and P.~Parys are partially supported by the Polish
National Science Centre grant 2016/21/D/ST6/01376.  
N. Fijalkow is supported by The Alan Turing Institute under the EPSRC
grant EP/N510129/1 and the DeLTA project (ANR-16-CE40-0007).

\bibliographystyle{plain}
\bibliography{citat}

\end{document}